\documentclass{sigplanconf}
\usepackage{proof}
\usepackage{amssymb}
\usepackage{amsmath}
\usepackage{txfonts}
\newtheorem{lemma}{Lemma}[section]

\def\lshfit#1#2{\kern-#1 #2\kern#1}

\def\fillsquare{\kern2pt\raise0.25pt
     \hbox{$\vcenter{\hrule height0pt \hbox{\vrule width5pt height5pt} \hrule height0pt}$}}
\newenvironment{proof}{%
\vskip6pt{\em Proof\kern6pt}}{%
{\unskip\nobreak\hfill\penalty50\kern4pt\hbox{}\nobreak\hfill\fillsquare}\vskip6pt}
\begin{document}

\setlength{\pdfpageheight}{\paperheight}
\setlength{\pdfpagewidth}{\paperwidth}

\conferenceinfo{}{}
\copyrightyear{2015} 
\copyrightdata{978-1-nnnn-nnnn-n/yy/mm} 
\doi{nnnnnnn.nnnnnnn}

\title{%
Session Types in a Linearly Typed Multi-Threaded\break Lambda-Calculus
}

\authorinfo%
{Hongwei Xi\and Zhiqiang Ren\and Hanwen Wu\and William Blair}{Boston University}{\{hwxi,aren,hwwu,wdblair\}@cs.bu.edu}
\maketitle

\def\MTLC{\mbox{MTLC}}

\begin%
{abstract}

We present a formalization of session types in a multi-threaded
lambda-calculus (MTLC) equipped with a linear type system,
establishing for the MTLC both type preservation and global progress.
The latter (global progress) implies that the evaluation of a
well-typed program in the MTLC can never reach a deadlock.  As this
formulated MTLC can be readily embedded into ATS, a full-fledged
language with a functional programming core that supports both
dependent types (of DML-style) and linear types, we obtain a direct
implementation of session types in ATS. In addition, we gain immediate
support for a form of dependent session types based on this embedding
into ATS. Compared to various existing formalizations of session
types, we see the one given in this paper is unique in its closeness
to concrete implementation. In particular, we report such an
implementation ready for practical use that generates Erlang code from
well-typed ATS source (making use of session types), thus taking great
advantage of the infrastructural support for distributed computing in
Erlang.

\end{abstract}
\def\simp{\rightarrow}
\def\timp{\rightarrow}
\def\ctimp{\Rightarrow}
\def\itimp{\rightarrow_{i}}
\def\ltimp{\rightarrow_{l}}
\def\cres{rc}
\def\const{c}
\def\cfun{\mbox{\it cf}}
\def\ccon{\mbox{\it cc}}
\def\ctrue{\mbox{\it true}}
\def\cfalse{\mbox{\it false}}
\def\exp{e}
\def\vexp{\vec{\exp}}
\def\val{v}
\def\vval{\vec{\val}}
\def\xf{\mbox{\it x{\kern0.5pt}f}}
\def\dif{\mbox{\tt if}}
\def\dfst#1{\mbox{\tt fst}(#1)}
\def\dsnd#1{\mbox{\tt snd}(#1)}
\def\dunit{\langle\rangle}
\def\tuple#1{\langle#1\rangle}
\def\lam#1#2{{\tt lam}\;#1.\,#2}
\def\app#1#2{{\tt app}(#1, #2)}
\def\fix#1#2{{\tt fix}\;#1.\,#2}
\def\letin#1#2{{\tt let}\;#1\;{\tt in}\;#2\;{\tt end}}
\def\ty{T}
\def\vw{V}
\def\vwty{\hat{T}}
\def\tvar{\alpha}
\def\vtvar{\hat{\alpha}}
\def\tunit{\mbox{\bf 1}}
\def\tbase{\delta}
\def\tint{\mbox{\bf int}}
\def\tbool{\mbox{\bf bool}}
\def\vtbase{\hat{\tbase}}
\def\tpjg{\vdash}
\def\temd{\models}
\def\SIG{\mbox{\rm SIG}}
\def\langz{\MTLC_0}
\def\langch{\MTLC_{{\rm ch}}}
\def\ST{S}
\def\stvar{\sigma}
\def\stnil{\mbox{\tt nil}}
\def\stnild{\overline{\stnil}}
\def\dual#1{\mbox{\it dual}(#1)}
\def\chsnd#1#2{\mbox{\tt snd}(#1)::#2}
\def\chrcv#1#2{\mbox{\tt rcv}(#1)::#2}
\def\tchpos{\mbox{\bf chpos}}
\def\tchneg{\mbox{\bf chneg}}
\def\fsendpos{\mbox{\it send}}
\def\fsendneg{\underline{\mbox{\it send}}}
\def\frecvpos{\mbox{\it recv}}
\def\frecvneg{\underline{\mbox{\it recv}}}
\def\fclosepos{\mbox{\it close}}
\def\fcloseneg{\underline{\mbox{\it close}}}
\def\fchnegcreate{\mbox{\it chneg\_create}}
\def\fchposneglink{\mbox{\it chposneg\_link}}
\def\chpcst{\mbox{\it ch}}
\def\chncst{\overline{\mbox{\it ch}}}
\def\fchnegcreatetwo{\mbox{\it chneg\_create2}}
\section{Introduction}
\label{section:introduction}
In broad terms, a (dyadic) session is an interaction between two
concurrently running programs, and a session type is a form of type for
specifying (or classifying) sessions. As an example, let us assume that two
programs P and Q are connected with a bidirectional channel. From the
perspective of P, the channel may be specified by a term sequence of the
following form:
$$\chsnd{\tint}{\chsnd{\tint}{\chrcv{\tbool}{\stnil}}}$$
which means that an integer is to be sent, another integer is to be
sent, a boolean is to be received, and finally the channel is to be closed.
Clearly, from the perspective of Q, the channel should be specified
by the following term sequence:
$$\chrcv{\tint}{\chrcv{\tint}{\chsnd{\tbool}{\stnil}}}$$ which means
precisely the dual of what the previous term sequence does.  We may think
of P as a client who sends two integers to the server Q and then receives
from Q either true or false depending on whether or not the first sent
integer is less than the second one.

The session between P and Q is bounded in the sense that it contains only a
bounded number of sends and receives. By introducing recursively defined
session types, we can specify unbounded sessions containing indefinite
numbers of sends and receives.

\def\tchan{\mbox{\bf chan}}
\begin%
{figure}
\fontsize{8pt}{9pt}\selectfont
\begin%
{verbatim}
fun P() = let
  val () = channel_send (CH, I1)
  val () = channel_send (CH, I2)
  val b0 = channel_recv (CH)
  val () = channel_close (CH)
in b0 end (* end of [P] *)

fun Q() = let
  val i1 = channel_recv (CH)
  val i2 = channel_recv (CH)
  val () = channel_send (CH, i1 < i2)
  val () = channel_close (CH)
in () end (* end of [Q] *)
\end{verbatim}
\caption{Some pseudo code in ML-like syntax}
\label{figure:P-and-Q}
\end{figure}
In Figure~\ref{figure:P-and-Q}, we present some pseudo code showing a
plausible way to implement the programs P and Q. Please note that the
functions P and Q, though written together here, can be written in separate
contexts. We use $\mbox{CH}$ to refer to a channel available in the
surrounding context of the code and $\mbox{I1}$ and $\mbox{I2}$ for two
integers; the functions $\mbox{\tt channel\_send}$ and $\mbox{\tt
  channel\_recv}$ are for sending and receiving data via a given channel,
and $\mbox{\tt channel\_close}$ for closing a given channel.  Let us now
sketch a way to make the above pseudo code typecheck.  We can assign the
following type to $\mbox{\tt channel\_send}$:
$$(!\tchan(\chsnd{\ty}{\ST}) \gg \tchan(\ST), \ty) \timp \tunit$$ where
$\ty$ stands for a type and $\ST$ for a session type. Basically, this type
means that calling $\mbox{\tt channel\_send}$ on a channel of the type
$\tchan(\chsnd{\ty}{\ST})$ and a value of the type $\ty$ returns a unit
while {\it changing} the type of the channel to $\tchan(\ST)$.  Clearly,
$\tchan$ must be a linear type constructor for this to make sense. As can
be expected, the type assigned to $\mbox{\tt channel\_recv}$ should be of
the following form:
$$(!\tchan(\chrcv{\ty}{\ST}) \gg \tchan(\ST)) \timp \ty$$ which means that
calling $\mbox{\tt channel\_recv}$ on a channel of the type
$\tchan(\chrcv{\ty}{\ST})$ returns a value of the type $\ty$ while changing
the type of the channel to $\tchan(\ST)$.  As for $\mbox{\tt
  channel\_close}$, it is assigned the following
type: $$(\tchan(\stnil))\timp\tunit$$ indicating that calling $\mbox{\tt
  channel\_close}$ on a channel consumes the channel (so that the channel
is no longer available for use).

ATS~\cite{ATS-types03,CPwTP} is a full-fledged language with a functional
programming core based on ML that supports both dependent types (of
DML-style~\cite{XP99,DML-jfp07}) and linear types.  Its highly expressive
type system makes it largely straightforward to implement session types in
ATS (e.g., based on the outline given above) if our concern is primarily
about type-correctness, that is, finding a way to assign proper types to
various primitive (or built-in) session-type functions such as $\mbox{\tt
  channel\_send}$ and $\mbox{\tt channel\_recv}$ so that type-errors can be
issued if communication protocols specified by session types are not
correctly followed. For instance, there have already been implementations
of session types in Haskell (e.g.,~\cite{NeubauerT04,PucellaT08}) and
elsewhere that offer type-correctness. However, mere type-correctness is
clearly inadequate. We are to present a concrete example showing that
deadlocking can be easily introduced if certain primitive session-typed
functions are supported inadvertently. We want to go beyond
type-correctness. In particular, we are interested in proving formally the
property that concurrency based on session types (as those implemented in
ATS) can never result in deadlocking, which is often referred to as {\em
  global progress}. So we need to formalize session types. Furthermore, we
expect to have a formalization for session types that can greatly
facilitate the determination of deadlock-freeness of certain specific
primitive session-typed functions.

There have been many formalizations of session types in the literature
(e.g.,
~\cite{Honda93,HondaVK98,CastagnaDGP09,GayV10,Vasconcelos12,Wadler12,ToninhoCP13}).
Often the dynamics formulated in a formalization of session types is
based on $\pi$-calculus~\cite{PiCalculus} or its variants/likes. In
our attempt to implement session types in ATS~\cite{ats-lang}, we
found that a formalization of session types based on multi-threaded
$\lambda$-calculus (MTLC) can be of great value due to its closeness
to the underlying implementation language (that is, ATS in our
case). Such a formalization~\cite{LTSMP} is less abstract and more
operational and is also amenable to extension. For instance,
multi-party sessions~\cite{HondaYC08} can be directly introduced into
a MTLC-based formalization of session types. On the other hand,
supporting multi-party sessions in a logic-based formalization of
session-types is yet a great challenge.

Before moving on with formal development, we would now like to use the
moment to slightly modify the example presented above so as to make it
easier for the reader to access the formalization of session types to be
presented.  As is pointed out above, each session type has its own dual and
the dual of its dual equals itself. However, we do not plan to make use of
the notion of dual of a session type explicitly in this paper. Instead, we
are to introduce two kinds of channels: positive channels and negative
channels.  The type for a positive channel specified by a session type is
considered the dual of the type for a negative channel specified by the
same session type and vice versa. More formally, we use $\tchpos(\ST)$ and
$\tchneg(\ST)$ for a positive channel and a negative channel specified by
$\ST$, respectively. One may think of $\tchpos(\ST)$ and
$\tchneg(\dual{\ST})$ being equal and $\tchpos(\dual{\ST})$ and
$\tchneg(\ST)$ being equal, where $\dual{\ST}$ refers to the dual of $\ST$.
The function $\mbox{\tt channel\_send}$ splits into a positive version
$\mbox{\tt chanpos\_send}$ of the following type:
$$(!\tchpos(\chsnd{\vwty}{\ST}) \gg \tchpos(\ST), \vwty) \timp \tunit$$
and a negative version $\mbox{\tt channeg\_recv}$ of the following type:
$$(!\tchneg(\chrcv{\vwty}{\ST}) \gg \tchneg(\ST), \vwty) \timp \tunit$$
where $\vwty$ ranges over linear types (which include non-linear types as a
special case). In the actual formalization, we use the following
equivalent type for \mbox{\tt chanpos\_send}:
$$(\tchpos(\chsnd{\vwty}{\ST}), \vwty) \timp \tchpos(\ST)$$
and a similar one for \mbox{\tt channeg\_recv} so as to simplify
the presentation. Similarly, the function $\mbox{\tt channel\_recv}$ splits into
a positive version $\mbox{\tt chanpos\_recv}$ of the following type
$$(!\tchpos(\chrcv{\vwty}{\ST}) \gg \tchpos(\ST)) \timp \vwty$$
and a negative version $\mbox{\tt channeg\_send}$ of the following type:
$$(!\tchneg(\chsnd{\vwty}{\ST}) \gg \tchneg(\ST)) \timp \vwty$$
In the actual formalization, we use the following
equivalent type for \mbox{\tt chanpos\_recv}:
$$(\tchpos(\chrcv{\vwty}{\ST})) \timp \tchpos(\ST)\otimes\vwty$$ and a
similar one for \mbox{\tt channeg\_recv}, where $\otimes$ forms a
linear tuple type.  For $\mbox{\tt channel\_close}$, the positive
version and negative version are named $\mbox{\tt chanpos\_close}$ and
$\mbox{\tt channeg\_close}$, respectively.\footnote{Overloading in ATS
  can allow the programmer to still use the unsplit function names and
  thus avoid clutter in coding.}  When reading the pseudo code in
Figure~\ref{figure:P-and-Q}, please note that the $\mbox{CH}$ in the
body of P refers to a negative channel (due to the assumption that P
is a client) and the $\mbox{CH}$ in the body of Q a positive channel.

The rest of the paper is organized as follows.  In
Section~\ref{section:langz}, we formulate a multi-threaded
$\lambda$-calculus $\langz$ equipped with a simple linear type system,
setting up the basic machinery for further development. We then extend
$\langz$ to $\langch$ in Section~\ref{section:langch} with support for
session types and establish both type preservation and global progress for
$\langch$. We give interpretation to some linear logic connectives in
Section~\ref{section:LLinterp} to facilitate understanding of session
types, and briefly mention some issues on implementing session types. We
next present a couple of interesting examples in
Section~\ref{section:examples} to illustrate programming with session
types.  Lastly, we discuss some closely related work in
Section~\ref{section:related-work-conclusion} and then conclude.

The primary contribution of the paper consists of a novel formalization of
session-types. Compared to various existing formalizations of session
types(e.g.,
~\cite{Honda93,HondaVK98,CastagnaDGP09,GayV10,Vasconcelos12,Wadler12,ToninhoCP13}), we
see this one being unique in its closeness to concrete
implementation. Indeed, we report an implementation of session types
ready for practical use that generates Erlang code from well-typed ATS
source. The primary technical contribution of the paper lies in a simple and
general approach to showing that concurrency based on session types is
deadlock-free.  For this, a novel notion of DF-reducibility (where DF stands
for deadlock-freeness) is introduced.

\section%
{$\langz$ with Linear Types}
\label{section:langz}
\begin%
{figure}
\[%
\fontsize{8pt}{9pt}\selectfont
\begin%
{array}{lrcl}
\mbox{expr.} & \exp & ::= &
x \mid f \mid \cres \mid \const(\vexp) \mid \\
             &      &     &
\dunit \mid \tuple{\exp_1,\exp_2} \mid \dfst{\exp} \mid \dsnd{\exp} \mid \\
             &      &     &
\letin{\tuple{x_1,x_2}=\exp_1}{\exp_2} \mid \\
             &      &     &
\lam{x}{\exp} \mid \app{\exp_1}{\exp_2} \mid \fix{f}{\val}  \\
\mbox{values} & v & ::= &
x \mid \cres \mid \ccon(\vval) \mid \dunit \mid \tuple{\val_1,\val_2} \mid \lam{x}{\exp} \\
\mbox{types} & \ty & ::= &
\tvar \mid \tbase \mid \tunit \mid \ty_1*\ty_2 \mid \vwty_1\itimp\vwty_2  \\
\mbox{viewtypes} & \vwty & ::= & \vtvar \mid \vtbase \mid \ty \mid \vwty_1\otimes\vwty_2 \mid \vwty_1\ltimp\vwty_2 \\
\mbox{int. expr. ctx.\kern-6pt} & \Gamma & ::= & \emptyset \mid \Gamma, \xf:\ty \\
\mbox{lin. expr. ctx.\kern-6pt} & \Delta & ::= & \emptyset \mid \Delta, x:\vwty \\
\end{array}\]
\caption{Some syntax for $\langz$}
\label{figure:langz:syntax}
\end{figure}
We first present a multi-threaded lambda-calculus $\langz$ equipped with a
simple linear type system, setting up the basic machinery for further
development. The dynamic semantics of $\langz$ can essentially be seen as
an abstract form of evaluation of multi-threaded programs.

Some syntax of $\langz$ is given in Figure~\ref{figure:langz:syntax}.
We use $x$ for a lam-variable and $f$ for a fix-variable, and $\xf$ for
either a lam-variable or a fix-variable. Note that a lam-variable is
considered a value but a fix-variable is not.
We use $\cres$ for constant resources and $\const$ for constants, which
include both constant functions $\cfun$ and constant constructors $\ccon$.
We treat resources in $\langz$ abstractly and will later introduce
communication channels as a concrete form of resources.
The meaning of various standard forms of expressions in $\langz$
should be intuitively clear.  We may refer to a closed expression (that is,
an expression containing no free variables) as a {\em program}.

We use $\ty$ and $\vwty$ for (non-linear) types and (linear)
viewtypes, respectively, and refer $\vwty$ to as a true viewtype if it
is a viewtype but not a type. We use $\tbase$ and $\vtbase$ for base
types and base viewtypes, respectively.  For instance, $\tbool$ is the
base type for booleans and $\tint$ for integers.  For a simplified
presentation, we do not introduce any concrete base viewtypes in
$\langz$. We assume a signature $\SIG$ for assigning a viewtype to
each constant resource $\cres$ and a constant type (c-type) of the
form $(\vwty_1,\ldots,\vwty_n)\ctimp\vwty$ to each constant.  We use
$\tvar$ and $\vtvar$ for variables ranging over types and viewtypes,
respectively, but we do not support explicit quantification over these
variables until Section~\ref{section:additional}.

Note that a type is always considered a viewtype.  Let $\vwty_1$ and
$\vwty_2$ be two viewtypes. The type constructor $\otimes$ is based on
multiplicative conjunction in linear logic.  Intuitively, if a
resource is assigned the viewtype $\vwty_1\otimes\vwty_2$, then the
resource is a conjunction of two resources of viewtypes $\vwty_1$ and
$\vwty_2$.  The type constructor $\ltimp$ is essentially based on
linear implication $\multimap$ in linear logic.  Given a function of
the viewtype $\vwty_1\ltimp\vwty_2$ and a value of the viewtype
$\vwty_1$, applying the function to the value yields a result of the
viewtype $\vwty_2$ while the function itself is consumed. If the
function is of the type $\vwty_1\itimp\vwty_2$, then applying the
function does not consume it. The subscript $i$ in $\itimp$ is often
dropped, that is, $\timp$ is assumed to be $\itimp$ by default.  The
meaning of various forms of types and viewtypes is to be made clear
and precise when the rules are presented for assigning viewtypes to
expressions in $\langz$.

\def\tchoose{\mbox{\bf choose}} There is no type constructor in
$\langz$ based on additive disjunction in linear logic denoted by
$\oplus$ (but such a type constructor is fully supported in ATS), and
this omission is entirely for the sake of a simplified
presentation. There are also multiplicative disjunction ($\invamp$)
and additive conjunction ($\&$) in linear logic~\cite{GIRARD87}.  If
we see viewtypes negatively in the sense that they are for classifying
capabilities (spaces) of consuming (storing) resources, then
$\vwty_1\invamp\vwty_2$ essentially means the capability (space) that
joins two classified by $\vwty_1$ and $\vwty_2$.  We can interpret
$\vwty_1\&\vwty_2$ as a choice to obtain any capability (space) that
can be classified by either $\vwty_1$ or $\vwty_2$. There is no type
constructor corresponding to $\invamp$ in ATS.  As for
$\vwty_1\&\vwty_2$, we can use the following dependent type in ATS to
replace it:
$$\forall b:bool.~\tbool(b)\ltimp\tchoose(\vwty_1, \vwty_2, b)$$ where
$\tbool(b)$ is a singleton type for the only boolean value equal to $b$ and
$\tchoose(\vwty_1, \vwty_2, b)$ equals $\vwty_1$ or $\vwty_2$ depending
whether $b$ is true or false, respectively.

\def\fthreadcreate{\mbox{\it thread\_create}}
There is a special constant function $\fthreadcreate$ in $\langz$
for thread creation, which is assigned the following rather interesting
c-type:
\[\begin{array}{rcl}
\fthreadcreate & : & (\tunit\ltimp\tunit) \ctimp \tunit
\end{array}\]
A function of the type $\tunit\ltimp\tunit$ is a procedure that takes no
arguments and returns no result (when its evaluation terminates).  Given
that $\tunit\ltimp\tunit$ is a true viewtype, a procedure of this type may
contain resources and thus must be called exactly once. The operational
semantics of $\fthreadcreate$ is to be formally defined later.

\def\dom{\mbox{\bf dom}}
\def\mapadd#1#2#3{#1[#2\mapsto #3]}
\def\mapdel#1#2{#1\backslash#2}
\def\maprep#1#2#3{#1[#2 := #3]}
A variety of mappings, finite or infinite, are to be introduced in the
rest of the presentation.  We use $[]$ for the empty mapping and
$[i_1,\ldots,i_n\mapsto o_1,\ldots,o_n]$ for the finite mapping that maps
$i_k$ to $o_k$ for $1\leq k\leq n$.  Given a mapping $m$, we write
$\dom(m)$ for the domain of $m$. If $i\not\in\dom(m)$, we use
$\mapadd{m}{i}{o}$ for the mapping that extends $m$ with a link from $i$ to
$o$.  If $i\in\dom(m)$, we use $\mapdel{m}{i}$ for the mapping obtained
from removing the link from $i$ to $m(i)$ in $m$, and $\maprep{m}{i}{o}$
for $\mapadd{(\mapdel{m}{i})}{i}{o}$, that is, the mapping obtained from
replacing the link from $i$ to $m(i)$ in $m$ with another link from $i$ to
$o$.

\def\resof{\rho}
\begin{figure}
\[\begin{array}{rcl}
\resof(\cres) & = & \{\cres\} \\
\resof(\const(\exp_1,\ldots,\exp_n)) & = & \resof(\exp_1)\uplus\cdots\uplus\resof(\exp_n) \\
\resof(\xf) & = & \emptyset \\
\resof(\dunit) & = & \emptyset \\
\resof(\tuple{\exp_1,\exp_2}) & = & \resof(\exp_1)\uplus\resof(\exp_2) \\
\resof(\dfst{\exp}) & = & \resof(\exp) \\
\resof(\dsnd{\exp}) & = & \resof(\exp) \\
\resof(\dif(\exp_0,\exp_1,\exp_2)) & = & \resof(\exp_0)\uplus\resof(\exp_1) \\
\resof(\letin{\tuple{x_1,x_2}=\exp_1}{\exp_2}) & = & \resof(\exp_1)\uplus\resof(\exp_2) \\
\resof(\lam{x}{\exp}) & = & \resof(\exp) \\
\resof(\app{\exp_1}{\exp_2}) & = & \resof(\exp_1)\uplus\resof(\exp_2) \\
\resof(\fix{f}{\val}) & = & \resof(\val) \\
\end{array}\]
\caption{The definition of $\resof(\cdot)$}
\label{figure:resof}
\end{figure}
We define a function $\resof(\cdot)$ in Figure~\ref{figure:resof} to
compute the multiset (that is, bag) of constant resources in a given
expression.  Note that $\uplus$ denotes the multiset union. In the type
system of $\langz$, it is to be guaranteed that $\resof(\exp_1)$ equals
$\resof(\exp_2)$ whenever an expression of the form
$\dif(\exp_0,\exp_1,\exp_2)$ is constructed, and this justifies
$\resof(\dif(\exp_0,\exp_1,\exp_2))$ being defined as
$\resof(\exp_0)\uplus\resof(\exp_1)$.

\def\rns{R}
\def\RES{{\bf RES}}
We use $\rns$ to range over finite multisets of resources. Therefore,
$\rns$ can also be regarded as a mapping from resources to natural numbers:
$\rns(\cres) = n$ means that there are $n$ occurrences of $\cres$ in
$\rns$. It is clear that we may not combine resources arbitrarily. For
instance, we may want to exclude the combination of one resource stating
integer 0 at a location L and another one stating integer 1 at the same
location. We fix an abstract collection $\RES$ of finite multisets of
resources and assume the following:
\begin{itemize}
\item
$\emptyset\in\RES$.
\item
For any $\rns_1$ and
$\rns_2$, $\rns_2\in\RES$ if $\rns_1\in\RES$ and $\rns_2\subseteq\rns_1$,
where $\subseteq$ is the subset relation on multisets.
\end{itemize}
We say that $\rns$ is a valid multiset of resources if $\rns\in\RES$ holds.

\def\pool{\Pi}
\def\tid{\mbox{\it tid}}
\def\esubst{\theta}
\def\tsubst{\Theta}

In order to formalize threads, we introduce a notion of {\em
  pools}. Conceptually, a pool is just a collection of programs (that is,
closed expressions).  We use $\pool$ for pools, which are formally defined
as finite mappings from thread ids (represented as natural numbers) to
(closed) expressions in $\langz$ such that $0$ is always in the domain of
such mappings.  Given a pool $\pool$ and $\tid\in\dom(\pool)$, we refer to
$\pool(\tid)$ as a thread in $\pool$ whose id equals $\tid$. In particular,
we refer to $\pool(0)$ as the main thread in $\pool$.  The definition of
$\resof(\cdot)$ is extended as follows to compute the multiset of resources
in a given pool:
\begin%
{center}
$\resof(\pool)=\biguplus_{\tid\in\dom(\pool)}\resof(\pool(\tid))$
\end{center}
We are to define a relation on pools in
Section~\ref{section:langz:dynamics} to simulate multi-threaded program
execution.

\begin{figure}
\fontsize{8}{8}\selectfont
\[\begin{array}{c}
\infer[\mbox{\bf (ty-res)}]
      {\Gamma;\emptyset\tpjg\cres:\vtbase}
      {\SIG\temd\cres:\vtbase} \\[6pt]
\infer[\mbox{\bf (ty-cst)}]
      {\Gamma;\Delta_1,\ldots,\Delta_n\tpjg\const(\exp_1,\ldots,\exp_n):\vwty}
      {$$\begin{array}{c}
       \SIG\temd\const:(\vwty_1,\ldots,\vwty_n)\ctimp\vwty \\
       \Gamma;\Delta_i\tpjg\exp_i:\vwty_i~~\mbox{for $1\leq i\leq n$} \\
       \end{array}$$} \\[6pt]
\infer[\mbox{\bf (ty-var-i)}]
      {(\Gamma,\xf:\ty;\emptyset)\tpjg \xf:\ty}
      {} \\[6pt]
\infer[\mbox{\bf (ty-var-l)}]
      {(\Gamma;\emptyset,x:\vwty)\tpjg x:\vwty}
      {} \\[6pt]
\infer[\mbox{\bf(ty-if)}]
      {\Gamma;\Delta_0,\Delta\tpjg\dif(\exp_0,\exp_1,\exp_2):\vwty}
      {$$\begin{array}{c}
       \Gamma;\Delta_0\tpjg\exp_0:\tbool \\
       \Gamma;\Delta\tpjg \exp_1:\vwty \kern6pt
       \Gamma;\Delta\tpjg \exp_2:\vwty \kern6pt
       \resof(\exp_1) = \resof(\exp_2) \\
       \end{array}$$} \\[6pt]
\infer[\mbox{\bf(ty-unit)}]
      {\Gamma;\emptyset\tpjg\dunit:\tunit}{} \\[6pt]
\infer[\mbox{\bf(ty-tup-i)}]
      {\Gamma;\Delta_1,\Delta_2\tpjg\tuple{\exp_1,\exp_2}:\ty_1*\ty_2}
      {\Gamma;\Delta_1\tpjg\exp_1:\ty_1 &
       \Gamma;\Delta_2\tpjg\exp_2:\ty_2 } \\[6pt]
\infer[\mbox{\bf(ty-fst)}]
      {\Gamma;\Delta\tpjg\dfst{\exp}:\ty_1}
      {\Gamma;\Delta\tpjg\exp:\ty_1*\ty_2}
\kern12pt
\infer[\mbox{\bf(ty-snd)}]
      {\Gamma;\Delta\tpjg\dsnd{\exp}:\ty_2}
      {\Gamma;\Delta\tpjg\exp:\ty_1*\ty_2} \\[6pt]
\infer[\mbox{\bf(ty-tup-l)}]
      {\Gamma;\Delta_1,\Delta_2\tpjg\tuple{\exp_1,\exp_2}:\vwty_1\otimes\vwty_2}
      {\Gamma;\Delta_1\tpjg\exp_1:\vwty_1 &
       \Gamma;\Delta_2\tpjg\exp_2:\vwty_2 } \\[6pt]
\infer[\mbox{\bf(ty-tup-l-elim)}]
      {\Gamma;\Delta_1,\Delta_2\tpjg\letin{\tuple{x_1,x_2}=\exp_1}{\exp_2}:\vwty}
      {$$\begin{array}{c}
       \Gamma;\Delta_1\tpjg\exp_1:\vwty_1\otimes\vwty_2 \\
       \Gamma;\Delta_2,x_1:\vwty_1,x_2:\vwty_2\tpjg\exp_2:\vwty \\
       \end{array}$$} \\[6pt]
\infer[\mbox{\bf(ty-lam-l)}]
      {\Gamma;\Delta\tpjg\lam{x}{\exp}:\vwty_1\ltimp\vwty_2}
      {(\Gamma;\Delta),x:\vwty_1\tpjg\exp:\vwty_2} \\[6pt]
\infer[\mbox{\bf(ty-app-l)}]
      {\Gamma;\Delta_1,\Delta_2\tpjg\app{\exp_1}{\exp_2}:\vwty_2}
      {\Gamma;\Delta_1\tpjg\exp_1:\vwty_1\ltimp\vwty_2 &
       \Gamma;\Delta_2\tpjg\exp_2:\vwty_1 } \\[6pt]
\infer[\mbox{\bf(ty-lam-i)}]
      {\Gamma;\emptyset\tpjg\lam{x}{\exp}:\vwty_1\itimp\vwty_2}
      {(\Gamma;\emptyset),x:\vwty_1\tpjg\exp:\vwty_2 &
       \resof(\exp) = \emptyset} \\[6pt]
\infer[\mbox{\bf(ty-app-i)}]
      {\Gamma;\Delta_1,\Delta_2\tpjg\app{\exp_1}{\exp_2}:\vwty_2}
      {\Gamma;\Delta_1\tpjg\exp_1:\vwty_1\itimp\vwty_2 &
       \Gamma;\Delta_2\tpjg\exp_2:\vwty_1 } \\[6pt]
\infer[\mbox{\bf(ty-fix)}]
      {\Gamma;\emptyset\tpjg\fix{f}{\val}:\ty}
      {\Gamma, f:\ty;\emptyset\tpjg\val:\ty} \\[6pt]
\infer[\mbox{\bf(ty-pool)}]
      {\tpjg\pool:\vwty}
      {$$\begin{array}{c}
       (\emptyset;\emptyset)\tpjg\pool(0):\vwty \\
       (\emptyset;\emptyset)\tpjg\pool(\tid):\tunit~~\mbox{for each $0<\tid\in\dom(\pool)$} \\
       \end{array}$$} \\[6pt]
\end{array}\]
\caption{The typing rules for $\langz$}
\label{figure:langz:typing_rules}
\end{figure}
\subsection{Static Semantics}
We present typing rules for $\langz$ in this section.  It is required that
each variable occur at most once in an intuitionistic (linear) expression
context $\Gamma$ ($\Delta$), and thus $\Gamma$ ($\Delta$) can be regarded
as a finite mapping.  Given $\Gamma_1$ and $\Gamma_2$ such that
$\dom(\Gamma_1)\cap\dom(\Gamma_2)=\emptyset$, we write
$(\Gamma_1,\Gamma_2)$ for the union of $\Gamma_1$ and $\Gamma_2$.  The same
notation also applies to linear expression contexts ($\Delta$).  Given an
intuitionistic expression context $\Gamma$ and a linear expression context
$\Delta$, we can form a combined expression context $(\Gamma;\Delta)$ if
$\dom(\Gamma)\cap\dom(\Delta)=\emptyset$.  Given $(\Gamma;\Delta)$, we may
write $(\Gamma;\Delta),x:\vwty$ for either $(\Gamma;\Delta,x:\vwty)$ or
$(\Gamma,x:\vwty;\Delta)$ (if $\vwty$ is actually a type).

We use $\tsubst$ for a substitution on type and viewtype variables:
\[\begin{array}{rcl}
\tsubst & ::= & [] \mid \tsubst[\tvar\mapsto\ty] \mid \tsubst[\vtvar\mapsto\vwty] \\
\end{array}\]
Given a viewtype $\vwty$, we write $\vwty[\tsubst]$ for the result of
applying $\tsubst$ to $\vwty$, which is defined in a standard manner.
Given a constant resource $\cres$, we write $\SIG\temd\cres:\vtbase$ to mean
that $\cres$ is assigned the viewtype $\vtbase$ in the signature $\SIG$.
Given a constant $\const$, we use the following judgment:
\[\begin{array}{c}
\SIG\temd\const:(\vwty^0_1,\ldots,\vwty^0_n)\ctimp\vwty^0 \\
\end{array}\]
to mean that $\const$ is assigned a c-type of the form
$(\vwty_1,\ldots,\vwty_n)\ctimp\vwty$ (in the signature $\SIG$) and there
exists $\tsubst$ such that $\vwty^0_i=\vwty_i[\tsubst]$ for $1\leq i\leq n$
and $\vwty^0=\vwty[\tsubst]$. In other words,
$(\vwty^0_1,\ldots,\vwty^0_n)\ctimp\vwty^0$ is an instance of
$(\vwty_1,\ldots,\vwty_n)\ctimp\vwty$. In the case where $\vwty^0_i$ is a
true viewtype for some $i$, $\vwty^0$ is required to also be a true
viewtype if the c-type is assigned to a constructor (rather than a
function).

A typing judgment in $\langz$ is of the form
$(\Gamma;\Delta)\tpjg\exp:\vwty$, meaning that $\exp$ can be
assigned the viewtype $\vwty$ under $(\Gamma;\Delta)$.  The typing rules
for $\langz$ are listed in Figure~\ref{figure:langz:typing_rules}.

By inspecting the rules in Figure~\ref{figure:langz:typing_rules},
we can readily see that a closed value cannot contain any resources if the
value itself can be assigned a type (rather than a linear type).
More formally, we have the following proposition:
\begin%
{proposition}
\label{proposition:value_type}
Assume that $(\emptyset;\emptyset)\tpjg\val:\ty$ is derivable.
Then $\resof(\val)=\emptyset$.
\end{proposition}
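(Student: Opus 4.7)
The plan is to proceed by induction on the derivation of $(\emptyset;\emptyset)\tpjg\val:\ty$, performing case analysis on the last typing rule applied. Since $\val$ is a value, the grammar for values restricts the last rule to one of \textbf{(ty-res)}, \textbf{(ty-var-i)}, \textbf{(ty-var-l)}, \textbf{(ty-cst)} (specifically with a constructor $\ccon$), \textbf{(ty-unit)}, \textbf{(ty-tup-i)}, \textbf{(ty-tup-l)}, \textbf{(ty-lam-l)}, or \textbf{(ty-lam-i)}. A good portion of these cases are immediately vacuous given the hypotheses: the variable rules \textbf{(ty-var-i)} and \textbf{(ty-var-l)} both require a nonempty context, and the rules \textbf{(ty-res)}, \textbf{(ty-tup-l)}, and \textbf{(ty-lam-l)} produce viewtypes of the forms $\vtbase$, $\vwty_1\otimes\vwty_2$, and $\vwty_1\ltimp\vwty_2$ respectively, none of which is a type according to the syntax in Figure~\ref{figure:langz:syntax}; since we assume $\val$ is assigned a genuine type $\ty$, these derivations cannot occur.

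The remaining cases dispatch cleanly. In case \textbf{(ty-unit)}, we have $\val=\dunit$ and $\resof(\dunit)=\emptyset$ directly from the definition in Figure~\ref{figure:resof}. In case \textbf{(ty-tup-i)}, $\val=\tuple{\val_1,\val_2}$ with $\ty=\ty_1*\ty_2$, and because the linear context partitions as $\Delta_1,\Delta_2=\emptyset$ we obtain $\Delta_1=\Delta_2=\emptyset$; each subderivation $(\emptyset;\emptyset)\tpjg\val_i:\ty_i$ then yields $\resof(\val_i)=\emptyset$ by the induction hypothesis, and so $\resof(\val)=\resof(\val_1)\uplus\resof(\val_2)=\emptyset$. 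In case \textbf{(ty-lam-i)}, $\val=\lam{x}{\exp}$ and the premise of the rule already supplies $\resof(\exp)=\emptyset$, so $\resof(\val)=\resof(\exp)=\emptyset$ by definition.

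The one case that needs a little care is \textbf{(ty-cst)} applied to a constructor, where $\val=\ccon(\val_1,\ldots,\val_n)$ is given type $\ty$ from some instantiated c-type $(\vwty_1,\ldots,\vwty_n)\ctimp\ty$. Here I rely on the restriction stated just after the definition of the judgment $\SIG\temd\const:\cdots$: when the c-type is assigned to a constructor and the result is a (non-linear) type, none of the $\vwty_i$ can be a true viewtype, so each $\vwty_i$ is itself a type. As in the pair case, the contexts split trivially because they sum to $\emptyset$, and applying the induction hypothesis to each subderivation $(\emptyset;\emptyset)\tpjg\val_i:\vwty_i$ yields $\resof(\val_i)=\emptyset$, whence $\resof(\val)=\biguplus_i\resof(\val_i)=\emptyset$.

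The only subtle step is this constructor case, where correctness hinges on reading off the constructor-side constraint from the signature discipline; everything else is either ruled out by the type/viewtype syntactic distinction, ruled out by the empty contexts, or an immediate invocation of the induction hypothesis together with the definition of $\resof(\cdot)$.
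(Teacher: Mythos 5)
Your proof is correct and is essentially the detailed version of what the paper only gestures at (the paper offers no explicit proof beyond ``by inspecting the rules''): an induction on the typing derivation, with the non-type-producing rules and variable rules dismissed as vacuous. You correctly identify the one load-bearing side condition, namely that a constructor whose result is a non-linear type may not take true-viewtype arguments, which is exactly what makes the $\ccon(\val_1,\ldots,\val_n)$ case go through.
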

This proposition plays a fundamental role in the design of $\langz$
as the rules in Figure~\ref{figure:langz:typing_rules} are actually
so formulated in order to make it hold.

The following lemma, which is often referred to as
{\em Lemma of Canonical Forms}, relates the form of a value to its type:
\begin{lemma}\label{lemma:langz:canonical}
Assume that $(\emptyset;\emptyset)\tpjg\val:\vwty$ is derivable.
\begin{itemize}
\item
If $\vwty=\tbase$,
then $\val$ is of the form $\ccon(\val_1,\ldots,\val_n)$.
\item
If $\vwty=\vtbase$,
then $\val$ is of the form $\cres$ or $\ccon(\val_1,\ldots,\val_n)$.
\item
If $\vwty=\tunit$, then $\val$ is $\dunit$.
\item
If
$\vwty=\ty_1*\ty_2$
or
$\vwty=\vwty_1\otimes\vwty_2$,
then $\val$ is of the form $\tuple{\val_1,\val_2}$.
\item
If $\vwty=\vwty_1\itimp\vwty_2$ or $\vwty=\vwty_1\ltimp\vwty_2$, then
$\val$ is of the form $\lam{x}{\exp}$.
\end{itemize}
\end{lemma}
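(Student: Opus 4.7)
The plan is to proceed by inversion on the typing derivation $(\emptyset;\emptyset)\tpjg\val:\vwty$, organized as a case analysis on the shape of the value $\val$. Since $\val$ is a value, by the grammar in Figure~\ref{figure:langz:syntax} it must be one of $x$, $\cres$, $\ccon(\vval)$, $\dunit$, $\tuple{\val_1,\val_2}$, or $\lam{x}{\exp}$. The case $\val = x$ is ruled out immediately: the only rules that can type a variable are \textbf{(ty-var-i)} and \textbf{(ty-var-l)}, both of which demand a nonempty $\Gamma$ or $\Delta$, contradicting the empty contexts.

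For the remaining five shapes, each is introduced by a unique rule (or small set of rules) whose conclusion constrains the outermost shape of $\vwty$. Reading Figure~\ref{figure:langz:typing_rules}: $\cres$ forces $\vwty = \vtbase$ via \textbf{(ty-res)}; $\ccon(\vval)$ forces $\vwty$ to be the result of the c-type assigned to $\ccon$ in $\SIG$, which by convention is $\tbase$ or $\vtbase$; $\dunit$ forces $\vwty = \tunit$ via \textbf{(ty-unit)}; $\tuple{\val_1,\val_2}$ forces $\vwty$ to be $\ty_1*\ty_2$ or $\vwty_1\otimes\vwty_2$ via \textbf{(ty-tup-i)} or \textbf{(ty-tup-l)}; and $\lam{x}{\exp}$ forces $\vwty$ to be $\vwty_1\itimp\vwty_2$ or $\vwty_1\ltimp\vwty_2$ via \textbf{(ty-lam-i)} or \textbf{(ty-lam-l)}.

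Each of the five bullets then follows by reading this shape-to-type correspondence contrapositively: the hypothesized outer shape of $\vwty$ is compatible with exactly the value shapes claimed. For example, if $\vwty = \ty_1*\ty_2$ the only applicable rule is \textbf{(ty-tup-i)} and hence $\val = \tuple{\val_1,\val_2}$; if $\vwty = \vtbase$, then either \textbf{(ty-res)} or \textbf{(ty-cst)} may have fired, giving $\val = \cres$ or $\val = \ccon(\vval)$ respectively.

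The only point requiring care is the signature convention that a constant constructor $\ccon$ is assigned a c-type whose result is a base type or base viewtype; without it, a $\ccon(\vval)$ could in principle masquerade as a value of type $\tunit$, a product, or a function type. Given this convention (implicit in the intended reading of $\SIG$, and also used in stating Proposition~\ref{proposition:value_type}), the argument is essentially bookkeeping over the typing rules and presents no real obstacle.
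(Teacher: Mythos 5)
Your proposal is correct and takes essentially the same approach as the paper, whose proof is simply ``by an inspection of the rules in Figure~\ref{figure:langz:typing_rules}''; your case analysis on the shape of $\val$ spells out exactly that inspection. Your observation about the implicit convention on the result types of constructor c-types is a fair point of care, but it does not change the argument.
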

\begin{proof}
By an inspection of the rules in Figure~\ref{figure:langz:typing_rules}.
\end{proof}

We use $\esubst$ for substitution on variables $\xf$:
\[\begin{array}{rcl}
\esubst & ::= & [] \mid \esubst[x\mapsto\val] \mid \esubst[f\mapsto e] \\
\end{array}\]
For each $\esubst$, we define the multiset $\resof(\esubst)$ of resources
in $\esubst$ as follows:
$$\resof(\esubst)=\uplus_{\xf\in\dom(\esubst)}\resof(\esubst(\xf))$$
Given an expression $\exp$, we use $\exp[\esubst]$ for the result of
applying $\esubst$ to $\exp$, which is defined in a standard manner.
We write $(\Gamma_1;\Delta_1)\tpjg\esubst:(\Gamma_2;\Delta_2)$ to
mean that
\begin{itemize}
\item
$\dom(\esubst)=\dom(\Gamma_2)\cup\dom(\Delta_2)$, and
\item
$(\Gamma_1;\emptyset)\tpjg\esubst(\xf):\Gamma_2(\xf)$
is derivable for each $\xf\in\Gamma_2$, and
\item
there exists a linear expression context $\Delta_{1,x}$ for each
$x\in\dom(\Delta_2)$ such that
$(\Gamma_1;\Delta_{1,x})\tpjg\esubst(x):\Delta_2(x)$
is derivable, and
\item
$\Delta_1=\cup_{x\in\dom(\Delta_2)}\Delta_{1,x}$
\end{itemize}
The following lemma, which is often referred to as
{\em Substitution Lemma}, is needed to establish the soundness of the type
system of $\langz$:
\begin{lemma}(Substitution)\label{lemma:langz:substitution}
Assume $(\Gamma_1;\Delta_1)\tpjg\esubst:(\Gamma_2;\Delta_2)$ and
$(\Gamma_2;\Delta_2)\tpjg\exp:\vwty$.
Then $(\Gamma_1;\Delta_1)\tpjg\exp[\esubst]:\vwty$ is derivable
and $\resof(\exp[\esubst])=\resof(\exp)\uplus\resof(\esubst)$.
\end{lemma}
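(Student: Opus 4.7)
The plan is to argue by structural induction on the derivation of $(\Gamma_2;\Delta_2)\tpjg\exp:\vwty$, case-analyzing on the last typing rule applied, and to establish the typing judgment $(\Gamma_1;\Delta_1)\tpjg\exp[\esubst]:\vwty$ together with the resource equation $\resof(\exp[\esubst])=\resof(\exp)\uplus\resof(\esubst)$ simultaneously in every case. The base cases are the variable rules together with \textbf{(ty-res)} and \textbf{(ty-unit)}. For \textbf{(ty-var-l)}, where $\Delta_2 = \{x{:}\vwty\}$, the substitution hypothesis delivers exactly $(\Gamma_1;\Delta_{1,x})\tpjg\esubst(x):\vwty$ with $\Delta_1=\Delta_{1,x}$. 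For \textbf{(ty-var-i)}, where $\exp=\xf$ and $\Delta_2=\emptyset$, I extract $(\Gamma_1;\emptyset)\tpjg\esubst(\xf):\Gamma_2(\xf)$ directly from the same hypothesis. The resource equation in these base cases reduces to the intuitionistic-substitution invariant discussed below.

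For the compound-expression rules whose typing splits the linear context — notably \textbf{(ty-cst)}, \textbf{(ty-tup-l)}, \textbf{(ty-tup-l-elim)}, \textbf{(ty-app-l)}, \textbf{(ty-app-i)}, and \textbf{(ty-if)} — the principal step is to split $\esubst$ and the linear context $\Delta_1$ consistently with the premise's split of $\Delta_2$. If the premise partitions $\Delta_2$ as $\Delta_{2,a},\Delta_{2,b}$, I restrict $\esubst$ to two subsubstitutions $\esubst_a$ and $\esubst_b$ of domains $\dom(\Gamma_2)\cup\dom(\Delta_{2,a})$ and $\dom(\Gamma_2)\cup\dom(\Delta_{2,b})$, and set $\Delta_{1,a}=\cup_{x\in\dom(\Delta_{2,a})}\Delta_{1,x}$ and $\Delta_{1,b}=\cup_{x\in\dom(\Delta_{2,b})}\Delta_{1,x}$. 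This yields well-formed substitution judgments at the split contexts, so the induction hypothesis applies to each subderivation; the resulting typings are reassembled using the same rule. Because $\resof$ distributes as a multiset union through every compound form, the resource equation falls out of the two inductive equations. Crucially, since linear variables occur at most once in any typable expression, routing each subsubstitution into the correct subexpression is unambiguous. The single-premise rules \textbf{(ty-fst)}, \textbf{(ty-snd)}, \textbf{(ty-lam-l)}, and \textbf{(ty-fix)} are handled more directly, with \textbf{(ty-fix)} extending $\Gamma_2$ by a fresh intuitionistic variable before invoking the induction hypothesis.

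The main obstacle is the \textbf{(ty-lam-i)} case, together with a closely related point in \textbf{(ty-var-i)}. To re-apply \textbf{(ty-lam-i)} one needs $\resof(\exp'[\esubst])=\emptyset$, whereas the induction hypothesis only yields $\resof(\exp'[\esubst])=\resof(\esubst)$; and in \textbf{(ty-var-i)} the resource equation requires $\resof(\esubst(\yf))=\emptyset$ for every other intuitionistic variable $\yf\in\Gamma_2$ that is in $\dom(\esubst)$ but absent from the expression $\xf$. Both reduce to an auxiliary invariant analogous to Proposition~\ref{proposition:value_type}: the intuitionistic components of any well-formed substitution are resource-free, i.e. $\resof(\esubst(\xf))=\emptyset$ for every $\xf\in\dom(\Gamma_2)$. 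I would carry this as a strengthening of the substitution hypothesis (or prove it as a separate lemma, exploiting the fact that $\Gamma_2(\xf)$ is a type rather than a true viewtype). Once it is in hand, $\resof(\esubst)$ reduces to contributions from the linear part alone, the \textbf{(ty-lam-i)} side condition is preserved, and the \textbf{(ty-if)} side condition $\resof(\exp_1)=\resof(\exp_2)$ is automatically maintained because the induction hypothesis adds the same $\resof(\esubst)$ to both branches.
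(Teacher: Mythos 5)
Your proposal is correct and follows the same route as the paper, which proves this lemma simply ``by induction on the derivation of $(\Gamma_2;\Delta_2)\tpjg\exp:\vwty$''; your case analysis, the splitting of $\esubst$ along the linear-context splits, and the observation that the \textbf{(ty-lam-i)} and \textbf{(ty-var-i)} cases need the intuitionistic components of $\esubst$ to be resource-free (an analogue of Proposition~\ref{proposition:value_type}) are exactly the details the paper leaves implicit. No further comment is needed.
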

\begin{proof}
By induction on the derivation of $(\Gamma_2;\Delta_2)\tpjg\exp:\vwty$.
\end{proof}

\subsection
{Dynamic Semantics}
\label{section:langz:dynamics}
We present evaluation rules for $\langz$ in this section.
The evaluation contexts in $\langz$ are defined below:
\[\begin{array}{lrcl}
\mbox{eval.~ctx.}
& E & ::= & \kern144pt \\
\kern12pt
\hbox to 0pt
{$[] \mid \const(\vval,E,\vexp) \mid \dif(E,\exp_1,\exp_2) \mid$ \hss} \\
\kern12pt
\hbox to 0pt
{$\tuple{E,\exp} \mid \tuple{\val, E} \mid \letin{\tuple{x_1,x_2}=E}{\exp} \mid$} \\
\kern12pt
\hbox to 0pt
{$\dfst{E} \mid \dsnd{E} \mid \app{E}{\exp} \mid \app{\val}{E}$} \\
\end{array}\]
Given an evaluation context $E$ and an expression $\exp$, we use $E[\exp]$
for the expression obtained from replacing the only hole $[]$ in $E$ with
$\exp$.

\def\subst#1#2#3{#3[#2\mapsto #1]}
\begin%
{definition}
We define pure redexes and their reducts as follows.
\begin%
{itemize}
\item
$\dif(\ctrue,\exp_1,\exp_2)$ is a pure redex whose reduct is $\exp_1$.
\item
$\dif(\cfalse,\exp_1,\exp_2)$ is a pure redex whose reduct is $\exp_2$.
\item
$\letin{\tuple{x_1,x_2}=\tuple{\val_1,\val_2}}{\exp}$
is a pure redex whose reduct is $\subst{\val_1,\val_2}{x_1,x_2}{\exp}$.
\item
$\dfst{\tuple{\val_1,\val_2}}$ is a pure redex whose reduct is $\val_1$.
\item
$\dsnd{\tuple{\val_1,\val_2}}$ is a pure redex whose reduct is $\val_2$.
\item
$\app{\lam{x}{\exp}}{\val}$
is a pure redex whose reduct is $\subst{\val}{x}{\exp}$.
\item
$\fix{f}{\val}$ is a pure redex whose reduct is $\subst{\fix{f}{\val}}{f}{\val}$.
\end{itemize}
\end{definition}

\def\eval{\rightarrow}
\def\Eval{\Rightarrow}
\def\frandbit{\mbox{\it randbit}}
Evaluating calls to constant functions is of particular importance in
$\langz$. Assume that $\cfun$ is a constant function of arity $n$. The
expression $\cfun(v_1,\ldots,v_n)$ is an {\em ad-hoc} redex if $\cfun$ is
defined at $v_1,\ldots,v_n$, and any value of $\cfun(v_1,\ldots,v_n)$ is a
reduct of $\cfun(v_1,\ldots,v_n)$. For instance, $1+1$ is an ad hoc redex
and $2$ is its sole reduct. In contrast, $1+\ctrue$ is not a redex as it is
undefined. We can even have non-deterministic constant functions.  For
instance, we may assume that the ad-hoc redex $\frandbit()$ can evaluate to
both 0 and 1.

Let $\exp$ be a well-typed expression of the form
$\cfun(v_1,\ldots,v_n)$ and $\resof(\exp)\subseteq\rns$ holds for some
valid $\rns$ (that is, $\rns\in\RES$).  We always assume that there
exists a reduct $\val$ in $\langz$ for $\cfun(v_1,\ldots,v_n)$ such
that $(\rns\backslash\resof(e))\uplus\resof(v)\in\RES$. By doing so,
we are able to give a presentation with much less clutter.

\begin%
{definition} Given expressions $\exp_1$ and $\exp_2$, we write
$\exp_1\eval\exp_2$ if $\exp_1=E[\exp]$ and $\exp_2=E[\exp']$ for some
$E,\exp$ and $\exp'$ such that $\exp'$ is a reduct of $\exp$, and we may
say that $\exp_1$ evaluates or reduces to $\exp_2$ purely if $\exp$ is a
pure redex.
\end{definition}

Note that resources may be generated as well as consumed when ad-hoc
reductions occur. This is an essential issue of great importance in any
linear type system designed to support practical programming.

\begin%
{definition}
Given pools $\pool_1$ and $\pool_2$, the relation $\pool_1\eval\pool_2$ is
defined according to the following rules:
\end{definition}
\begin{center}
\fontsize{8}{8}\selectfont
\[\begin{array}{c}
\infer[\mbox{(PR0)}]
      {\pool[\tid\mapsto\exp_1]\eval\pool[\tid\mapsto\exp_2]}
      {\exp_1\eval\exp_2} \\[6pt]
\infer[\mbox{(PR1)}]
      {\pool\eval\pool[\tid_0:=E[\tuple{}]][\tid\mapsto\app{\lam{x}{e}}{\tuple{}}]}
      {\pool(\tid_0)=E[\fthreadcreate(\lam{x}{e})]} \\[6pt]
\infer[\mbox{(PR2)}]
      {\pool[\tid\mapsto\tuple{}]\eval\pool}{\tid > 0}
\end{array}\]
\end{center}
If a pool $\pool_1$ evaluates to another pool $\pool_2$ by the rule (PR0),
then one program in $\pool_1$ evaluates to its counterpart in $\pool_2$ and
the rest stay the same; if by the rule (PR1), then a fresh program is
created; if by the rule (PR2), then a program (that is not the main
program) is eliminated.

From this point on, we always (implicitly) assume that
$\resof(\pool)\in\RES$ holds whenever $\pool$ is well-typed.  The soundness
of the type system of $\langz$ rests upon the following two theorems:
\begin%
{theorem}
(Subject Reduction on Pools)
\label{theorem:langz:subject_reduction_on_pools}
Assume that $\tpjg\pool_1:\vwty$ is derivable and $\pool_1\eval\pool_2$
holds for some $\pool_2$ satisfying $\resof(\pool_2)\in\RES$. Then
$\tpjg\pool_2:\vwty$ is also derivable.
\end{theorem}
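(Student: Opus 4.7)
The plan is to do case analysis on the three rules defining $\pool_1 \eval \pool_2$, using the typing rule \textbf{(ty-pool)} to reduce each case to an expression-level preservation statement. The overall shape mirrors standard subject reduction arguments for linear $\lambda$-calculi, but care is needed with the distribution of resources across threads.

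First, I would establish an auxiliary expression-level subject reduction lemma: if $(\emptyset;\emptyset)\tpjg\exp_1:\vwty$ and $\exp_1 \eval \exp_2$ (where the ad-hoc step, if any, is one of the assumed-preserving reducts), then $(\emptyset;\emptyset)\tpjg\exp_2:\vwty$. To prove this, I would state and prove a \emph{replacement lemma} for evaluation contexts: whenever $(\Gamma;\Delta)\tpjg E[\exp]:\vwty$, the derivation can be decomposed into $(\Gamma;\Delta')\tpjg\exp:\vwty'$ for some $\vwty'$ and $\Delta'\subseteq\Delta$, together with a context-typing of $E$ such that plugging any expression $\exp'$ with $(\Gamma;\Delta')\tpjg\exp':\vwty'$ into the hole yields $(\Gamma;\Delta)\tpjg E[\exp']:\vwty$. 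This is proved by induction on $E$ and is where the linear bookkeeping has to be verified case by case. With the replacement lemma in hand, the pure-redex cases reduce to showing that each pure redex has the same type as its reduct: all of these either follow directly from the inversion of a single typing rule (\textbf{(ty-fst)}, \textbf{(ty-snd)}, \textbf{(ty-if)}, \textbf{(ty-tup-l-elim)}) or from the Substitution Lemma (\textbf{(ty-app-i)}, \textbf{(ty-app-l)}, \textbf{(ty-fix)}). The ad-hoc redex case is covered by the assumption on constant-function reducts. Resource preservation inside a thread holds because the Substitution Lemma gives $\resof(\exp[\esubst]) = \resof(\exp)\uplus\resof(\esubst)$ and no pure redex creates or destroys resources.

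For rule \textbf{(PR0)}, only one thread $\pool(\tid)$ is affected, so applying the expression-level lemma and rewrapping with \textbf{(ty-pool)} gives the result (the main thread at index $0$ retains type $\vwty$ and every other thread retains type $\tunit$). For rule \textbf{(PR1)}, the old thread $\tid_0$ contains $E[\fthreadcreate(\lam{x}{e})]$, and by the replacement lemma plus the signature $\fthreadcreate:(\tunit\ltimp\tunit)\ctimp\tunit$ we can extract a derivation $(\emptyset;\Delta')\tpjg\lam{x}{e}:\tunit\ltimp\tunit$, whence $\fthreadcreate(\lam{x}{e}):\tunit$. Replacing it in the hole by $\tuple{}:\tunit$ (which is well-typed with empty resources by \textbf{(ty-unit)}) preserves the type of thread $\tid_0$, while the newly spawned thread $\app{\lam{x}{e}}{\tuple{}}$ is typed at $\tunit$ by \textbf{(ty-app-l)} and \textbf{(ty-unit)}; its resources are exactly those of $\lam{x}{e}$, so the total resource multiset of the pool is preserved. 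For rule \textbf{(PR2)}, removing a thread $\pool(\tid)=\tuple{}$ with $\tid>0$ is immediate: $\tuple{}$ is typed at $\tunit$ with no resources, so \textbf{(ty-pool)} still applies to the smaller pool.

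The main obstacle is the replacement lemma in the linear setting: we must show that the linear context splits cleanly so that the subexpression filling the hole can be swapped for another of the same viewtype drawn from the same portion of $\Delta$. The remaining subtleties are routine: checking that in \textbf{(PR1)} the linear function $\lam{x}{e}$ migrates together with its resources from the original thread into the freshly created thread (so that $\resof(\pool_1)=\resof(\pool_2)$), and that the side condition $\resof(\exp_1)=\resof(\exp_2)$ baked into \textbf{(ty-if)} is exactly what is needed for the if-branching pure reductions to preserve resource counts.
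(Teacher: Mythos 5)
Your proposal is correct and follows essentially the same route as the paper, whose proof is only a one-line sketch (``structural induction on the derivation of $\tpjg\pool_1:\vwty$, using the Substitution Lemma''): unfolding \textbf{(ty-pool)} immediately reduces that induction to exactly your case analysis on (PR0)--(PR2), with the evaluation-context replacement lemma and the Substitution Lemma supplying the expression-level preservation step. The only simplification you could note is that every thread is closed and typed under $(\emptyset;\emptyset)$, so the linear-context splitting in your replacement lemma is trivial and the real bookkeeping is entirely in $\resof(\cdot)$, which you handle correctly.
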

\begin%
{proof}
By structural induction on the derivation of $\tpjg\pool_1:\vwty$.
Note that Lemma~\ref{lemma:langz:substitution} is needed.
\end{proof}

\begin%
{theorem}
(Progress Property on Pools)
\label{theorem:langz:progress_on_pools}
Assume that $\tpjg\pool_1:\vwty$ is derivable. Then we have the following
possibilities:
\begin{itemize}
\item
$\pool_1$ is a singleton mapping $[0\mapsto\val]$ for some $\val$, or
\item
$\pool_1\eval\pool_2$ holds for some $\pool_2$ such that $\resof(\pool_2)\in\RES$.
\end{itemize}
\end{theorem}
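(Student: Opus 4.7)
The plan is to first prove a standard single-expression progress lemma and then lift it to pools by a case analysis on which of the three rules (PR0), (PR1), and (PR2) applies.

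First I would establish the following expression-level claim: if $(\emptyset;\emptyset)\tpjg\exp:\vwty$ is derivable, then one of three possibilities holds: $\exp$ is a value, or $\exp=E[\fthreadcreate(\lam{x}{e})]$ for some $E$ and $e$, or $\exp=E[\exp_0]$ where $\exp_0$ is a pure redex or an ad-hoc redex. This is proved by induction on the typing derivation, with the Lemma of Canonical Forms (Lemma~\ref{lemma:langz:canonical}) supplying the syntactic shape of any subterm that must be a value at a product, arrow, boolean, base, or unit (view)type; hence every elimination form applied to a canonical value forms a redex. The constant-function case splits: an application of $\fthreadcreate$ to a fully evaluated $\lam{x}{e}$ falls into the second branch, whereas any other saturated call $\cfun(\val_1,\ldots,\val_n)$ at a well-typed occurrence is handled as an ad-hoc redex using the global assumption that $\cfun$ has a reduct respecting $\RES$.

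Next, given $\tpjg\pool_1:\vwty$, I would proceed as follows. If some thread with $\tid>0$ satisfies $\pool_1(\tid)=\tuple{}$, rule (PR2) fires and the resulting pool has the very same resource multiset, so $\resof(\pool_2)\in\RES$. Otherwise, by rule (ty-pool) each $\pool_1(\tid)$ is typed under $(\emptyset;\emptyset)$, and I invoke the expression-level claim on each one. If some thread body has the form $E[\fthreadcreate(\lam{x}{e})]$ then rule (PR1) applies; if some thread body is neither a value nor a $\fthreadcreate$-in-context but admits a pure or ad-hoc reduction, rule (PR0) applies. The validity $\resof(\pool_2)\in\RES$ is checked case by case: pure reductions leave $\resof(\cdot)$ unchanged — for $\beta$-like rules this uses Lemma~\ref{lemma:langz:substitution}, and for the conditional it uses the side-condition $\resof(\exp_1)=\resof(\exp_2)$ in rule (ty-if); ad-hoc reductions invoke the stated assumption on the existence of a reduct that preserves validity; and (PR1) merely redistributes resources between two threads while leaving the total multiset unchanged. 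If none of these sub-cases applies, then every thread is a value, and every non-main thread has type $\tunit$, which by Canonical Forms must be $\dunit=\tuple{}$ — but this is excluded by the opening branch of the case split, so $\dom(\pool_1)=\{0\}$ and $\pool_1=[0\mapsto\val]$ for some $\val$, giving the first disjunct of the theorem.

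The main obstacle is the expression-level progress lemma rather than its lift to pools. The subtlety lies in treating $\fthreadcreate$ separately from other constant functions: it is saturated and well-typed but it has no ad-hoc reduct, so it cannot fall into the ordinary redex branch; at the same time one must be careful that it is precisely the applications to a $\lam{x}{e}$-value (guaranteed by Canonical Forms at type $\tunit\ltimp\tunit$) that are recognized as the (PR1) redex shape. Once this is handled cleanly, the rest of the argument is a routine inspection of the typing and evaluation rules.
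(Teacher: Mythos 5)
Your proposal is correct and follows essentially the same route as the paper's proof: a canonical-forms-based decomposition of each well-typed thread into a value or an evaluation context filled with a redex, followed by a case analysis selecting (PR0), (PR1), or (PR2). You are in fact more explicit than the paper about the special treatment of $\fthreadcreate$ (which the paper's sketch glosses over, mentioning only (PR0) and (PR2)) and about verifying $\resof(\pool_2)\in\RES$, both of which are welcome elaborations rather than deviations.
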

\begin%
{proof}
By structural induction on the derivation of $\tpjg\pool_1:\vwty$.  Note
that Lemma~\ref{lemma:langz:canonical} is needed. Essentially, we can
readily show that $\Pi_1(\tid)$ for any $\tid\in\dom(\Pi_1)$ is either a
value or of the form $E[\exp]$ for some evaluation context $E$ and redex
$\exp$.  If $\Pi_1(\tid)$ is a value for some $\tid>0$, then this value
must be $\tuple{}$.  So the rule $\mbox{(PR2)}$ can be used to reduce
$\Pi_1$.  If $\Pi_1(\tid)$ is of the form $E[e]$ for some redex $e$, then
the rule $\mbox{(PR0)}$ can be used to reduce $\Pi_1$.
\end{proof}

By combining Theorem~\ref{theorem:langz:subject_reduction_on_pools} and
Theorem~\ref{theorem:langz:progress_on_pools}, we immediately conclude that
the evaluation of a well-typed pool either leads to a pool that itself is a
singleton mapping of the form $[0\mapsto\val]$ for some value $\val$, or it
goes on forever.  In other words, $\langz$ is type-sound.

\section%
{Extending $\langz$ with Channels}
\label{section:langch}
There is no support for communication between threads in $\langz$, making
$\langz$ uninteresting as a multi-threaded language. We extend $\langz$ to
$\langch$ with support for synchronous communication channels in this
section. Supporting asynchronous communication channels is certainly
possible but would result in a more involved theoretical development.
We do support both synchronous and asynchronous session-typed communication
channels in practice, though. In order to assign types to channels, we
introduce session types as follows:
\[
\begin%
{array}{rcl}
\ST & ::= & \stnil \mid \chsnd{\vwty}{\ST} \mid \chrcv{\vwty}{\ST} \\
\end{array}
\]
An empty session is specified by $\stnil$.  When used to specify a {\em
  positive} channel, $\chsnd{\vwty}{\ST}$ means to send onto the channel a
value of the viewtype $\vwty$ and $\chrcv{\vwty}{\ST}$ means to receive
from the channel a value of the viewtype $\vwty$.  Dually, when used to
specify a {\em negative} channel, $\chsnd{\vwty}{\ST}$ means to receive
from the channel a value of the viewtype $\vwty$ and $\chrcv{\vwty}{\ST}$
means to send onto the channel a value of the viewtype $\vwty$. After
either sending or receiving is done, the channel is specified by $\ST$.

Formally, the dual of a session type is defined as follows:
\[
\begin%
{array}{rcl}
\dual{\stnil} & = & \stnild \\
\dual{\chsnd{\vwty}{\ST}} & = & \chrcv{\vwty}{\dual{\ST}} \\
\dual{\chrcv{\vwty}{\ST}} & = & \chsnd{\vwty}{\dual{\ST}} \\
\end{array}
\]
where $\stnild$ is another constant session type denoting the dual of
$\stnil$.  Traditionally, $\stnil$ and $\stnild$ are treated as the same
constant in the study on session types. In our implementation, a
positive channel specified by $\stnil$ awaits a message to close itself
while a negative channel specified by $\stnil$ (that is, a positive channel
specified by $\stnild$) sends out such a message before closing itself.

Formally, we use $\tchpos(\ST)$ and $\tchneg(\ST)$ for a positive and
negative channel specified by $\ST$, respectively. Though it is clear that
$\tchpos(\ST)$ and $\tchneg(\dual{\ST})$ equal $\tchneg(\ST)$ and
$\tchpos(\dual{\ST})$, respectively, we do not attempt for now to make use
of this fact in our formalization of session types. In particular, there is
currently no support for turning a positive channel into a negative channel
or vice versa.

We use $\stvar$ as a variable ranging over session types.  The function
$\fchnegcreate$ for creating a negative channel is assigned the following
c-type:
\[
\begin%
{array}{rcl}
\fchnegcreate & : &
(\tchpos(\stvar)\ltimp\tunit)\ctimp\tchneg(\stvar) \\
\end{array}\]
Given a linear function of the type $\tchpos(\ST)\ltimp\tunit$ for some
$\ST$, $\fchnegcreate$ essentially creates a positive channel and a
negative channel that are properly connected, and then starts a thread for
evaluating the call that applies the function to the positive channel, and
then returns the negative channel. The newly created positive channel and
negative channel share the same id.

The send and receive functions for positive channels
are given the following c-types:
\[
\begin%
{array}{rcl}
\fsendpos & : &
(\tchpos(\chsnd{\vtvar}{\stvar}), \vtvar)\ctimp\tchpos(\stvar) \\
\frecvpos & : &
(\tchpos(\chrcv{\vtvar}{\stvar}))\ctimp\tchpos(\stvar)\otimes\vtvar \\
\end{array}\]
Note that $\fsendpos$ and $\frecvpos$ correspond to the functions
$\mbox{\tt chanpos\_send}$ and $\mbox{\tt chanpos\_recv}$, respectively.

Dually,
the send and receive functions for negative channels
are given the following c-types:
\[
\begin%
{array}{rcl}
\frecvneg & : &
(\tchneg(\chrcv{\vtvar}{\stvar}), \vtvar)\ctimp\tchneg(\stvar) \\
\fsendneg & : &
(\tchneg(\chsnd{\vtvar}{\stvar}))\ctimp\tchneg(\stvar)\otimes\vtvar \\
\end{array}\]
Note that $\fsendneg$ and $\frecvneg$ correspond to the functions
$\mbox{\tt channeg\_send}$ and $\mbox{\tt channeg\_recv}$, respectively.

The functions $\fclosepos$ and $\fcloseneg$ for closing positive and
negative channels, respectively, are given the following c-types:
\[
\begin%
{array}{rcl}
\fclosepos~~:~~(\tchpos(\stnil))\ctimp\tunit
&\kern-4pt&
\fcloseneg~~:~~(\tchneg(\stnil))\ctimp\tunit \\
\end{array}\]
Note that $\fclosepos$ and $\fcloseneg$ can also be referred to as
$\mbox{\tt chanpos\_close}$ and $\mbox{\tt channeg\_close}$, respectively.

\def\chpres{\mbox{ch}}
\def\chnres{\overline{\mbox{ch}}}
In $\langch$, there are resource constants $\chpres_i$ and $\chnres_i$ for
positive and negative channels, respectively, where $i$ ranges over natural
numbers. For each $i$, $\chpres_i$ and $\chnres_i$ are dual to each other
and their channel ids are $i$.  We use $\chpcst$ and $\chncst$ to range
over $\chpres_i$ and $\chnres_i$, respectively, referring one as the dual
of the other.

There are no new typing rules in $\langch$ over $\langz$.
Given a session type $\ST$, we say that the type $\tchpos(\ST)$ matches the
type $\tchneg(\ST)$ and vice versa. In any type derivation of $\pool:\vwty$
satisfying $\resof(\pool)\in\RES$, the type assigned to a positive channel
$\chpcst$ is always required to match the one assigned to the corresponding
negative channel $\chncst$ of the same channel id.
For evaluating pools in $\langz$, we have the following additional rules in
$\langch$:
\begin{center}
\fontsize{8}{8}\selectfont
\[\begin{array}{c}
\infer[\mbox{(PR3)}]
      {\pool\eval\pool[\tid_0:=E[\chncst]][\tid\mapsto\app{\lam{x}{e}}{\chpcst}]}
      {\pool(\tid_0)=E[\fchnegcreate(\lam{x}{e})]} \\[6pt]
\infer[\mbox{(PR4-clos)}]
      {\pool\eval\pool[\tid_1:=E_1[\dunit]][\tid_2:=E_2[\dunit]]}
      {\pool(\tid_1)=E_1[\fclosepos(\chpcst)]&&\pool(\tid_2)=E_2[\fcloseneg(\chncst)]} \\[6pt]
\infer[\mbox{(PR4-send)}]
      {\pool\eval\pool[\tid_1:=E_1[\chpcst]][\tid_2:=E_2[(\chncst,\val)]]}
      {\pool(\tid_1)=E_1[\fsendpos(\chpcst, \val)]&&\pool(\tid_2)=E_2[\fsendneg(\chncst)]} \\[6pt]
\infer[\mbox{(PR4-recv)}]
      {\pool\eval\pool[\tid_1:=E_1[\chncst]][\tid_2:=E_2[(\chpcst,\val)]]}
      {\pool(\tid_1)=E_1[\frecvneg(\chncst, \val)]&&\pool(\tid_2)=E_2[\frecvpos(\chpcst)]} \\[6pt]
\end{array}\]
\end{center}
For instance, the rule $\mbox{PR4-send}$ states: If a program in a pool is
of the form $E_1[\fsendpos(\chpcst, \val)]$ and another of the form
$E_2[\fsendneg(\chncst)]$, then this pool can be reduced to another pool by
replacing these two programs with $E_1[\chpcst]$ and $E_2[(\chncst,v)]$,
respectively.

While Theorem~\ref{theorem:langz:subject_reduction_on_pools} (Subject
Reduction) can be readily established for $\langch$,
Theorem~\ref{theorem:langz:progress_on_pools} (Progress) requires some
special treatment due to the presence of session-typed primitive functions
$\fchnegcreate$, $\fclosepos$, $\fcloseneg$, $\fsendpos$, $\frecvpos$,
$\fsendneg$, and $\frecvneg$.

A partial (ad-hoc) redex in $\langch$ is of one of the following forms:
$\fclosepos(\chpcst)$, $\fcloseneg(\chncst)$, $\fsendpos(\chpcst, \val)$,
$\frecvpos(\chpcst)$, $\fsendneg(\chncst)$, and $\frecvneg(\chncst, \val)$.
Clearly, either a positive channel $\chpcst$ or a negative channel
$\chncst$ is involved in each partial redex.  We say that
$\fclosepos(\chpcst)$ and $\fcloseneg(\chncst)$ match, and
$\fsendpos(\chpcst, \val)$ and $\fsendneg(\chncst)$ match, and
$\frecvpos(\chpcst)$ and $\frecvneg(\chncst, \val)$ match.  We can
immediately prove in $\langch$ that each well-typed program is either a
value or of the form $E[\exp]$ for some evaluation context $E$ and
expression $\exp$ that is either a redex or a partial redex. We refer to an
expression as a {\it blocked} one if it is of the form $E[\exp]$ for some
partial redex $\exp$. We say two blocked expressions $E_1[\exp_1]$ and
$E_2[\exp_2]$ match if $\exp_1$ and $\exp_2$ are matching partial
redexes. Clearly, a pool containing two matching blocked expressions can be
reduced according to one of the rules $\mbox{PR4-clos}$, $\mbox{PR4-send}$,
and $\mbox{PR4-recv}$.

Intuitively, a pool $\Pi$ is deadlocked if $\Pi(tid)$ for
$\tid\in\dom(\Pi)$ are all blocked expressions but there are no matching
ones among them, or if $\Pi(0)$ is a value and $\Pi(tid)$ for positive
$\tid\in\dom(\Pi)$ are all blocked expressions but there are no matching
ones among them. The following lemma states that a well-typed pool in
$\langch$ can never be deadlocked:

\begin%
{lemma}
(Deadlock-Freedom)
\label{lemma:langch:deadlock-freeness}
Let $\Pi$ be a well-typed pool in $\langch$ such that $\Pi(0)$ is either a
value containing no channels or a blocked expression and $\Pi(tid)$ for
each positive $\tid\in\dom(\Pi)$ is a blocked expression. If $\Pi$ is
obtained from evaluating an initial pool containing no channels, then there
exist two thread ids $\tid_1$ and $\tid_2$ such that $\Pi(\tid_1)$ and
$\Pi(\tid_2)$ are matching blocked expressions.
\end{lemma}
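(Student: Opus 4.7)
The plan is to establish a forest-shaped structural invariant on the pool and then derive the lemma by a cycle argument. Associate to each pool $\pool$ an undirected multigraph $G_\pool$ whose nodes are the thread ids in $\dom(\pool)$ and whose edges, one per live channel pair, join the thread holding the positive endpoint to the thread holding the negative endpoint. The central claim, proved by induction on the length of the reduction sequence from the channel-free initial pool, is that $G_\pool$ is a forest: no self-loops, no multi-edges, no longer cycles.

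The inductive step proceeds by case analysis on each reduction rule. Rule (PR0) leaves $G_\pool$ unchanged, while (PR2) deletes a thread reducing to $\dunit$, which by the typing rules carries no resources and is thus an isolated node. Rules (PR1) and (PR3) introduce a fresh thread; captured linear channel endpoints migrate from the spawner to the new thread, and (PR3) additionally installs a brand-new edge between the two. Because the new thread is fresh, no cycle or multi-edge can arise from these rules. The delicate rules are (PR4-send) and (PR4-recv), in which the communicated value may contain endpoints of channels $c'$ that are transported from $\tid_1$ to $\tid_2$, where $\tid_1$ and $\tid_2$ are already adjacent in $G_\pool$ via the communication channel $c$. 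The inductive forest hypothesis is exactly what forbids the transferred endpoint of $c'$ from meeting its dual in $\tid_2$ (which would demand a prior multi-edge between $\tid_1$ and $\tid_2$) or from creating a longer cycle (since in a tree, removing one edge and reattaching it through an adjacent vertex yields another tree). When the transmitted value carries several channel endpoints at once, I would handle them one at a time and reapply the invariant at each intermediate step.

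With the forest invariant in hand, assume for contradiction that the hypotheses of the lemma hold yet no matching pair exists. For each blocked thread $\tid$, its partial redex mentions a unique channel endpoint; let $d(\tid) \in \dom(\pool)$ be the thread holding the dual endpoint of that channel. The dual endpoint must reside somewhere, because channels are created in duality and destroyed only by a matching (PR4-clos); the absence of self-loops in $G_\pool$ yields $d(\tid) \ne \tid$; and the hypothesis that $\pool(0)$, when a value, contains no channels guarantees $d(\tid)$ itself is blocked. Hence $d$ is a function from the finite set of blocked threads to itself, so the directed graph with edges $\tid \to d(\tid)$ must contain a directed cycle. A cycle of length two, $\tid \leftrightarrow \tid'$, either uses the same $G_\pool$-edge, in which case type matching on the positive and negative ends of that channel forces both partial redexes to match --- the desired pair; or it uses two distinct edges between $\tid$ and $\tid'$, contradicting the no-multi-edge condition. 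A directed cycle of length greater than two projects to an undirected cycle in $G_\pool$, again contradicting the forest invariant. The main obstacle is the forest-preservation argument for (PR4-send) and (PR4-recv), where one must carefully invoke the inductive hypothesis to rule out both multi-edges and self-loops arising from transported channel endpoints.
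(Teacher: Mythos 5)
Your proof is correct and is essentially the paper's own argument in graph-theoretic dress: the forest invariant on $G_\pool$ coincides with the paper's notion of DF-reducibility of the collection $\fMCHS(\pool)$ of per-thread channel sets (a regular collection is DF-reducible exactly when the associated endpoint graph is a forest, DF-reduction being edge contraction), and your rule-by-rule forest-preservation induction is Lemma~\ref{lemma:DFreducibility:4}. The only real difference is the final contradiction, where you chase a cycle through the map $\tid\mapsto d(\tid)$ on blocked threads, whereas the paper simply counts: $n$ threads blocked on pairwise-distinct channels yield at least $n$ edges on $n$ nodes, contradicting Lemma~\ref{lemma:DFreducibility:3}.
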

Note that it is entirely possible to encounter a scenario where the main
thread in a pool returns a value containing a channel while another thread
is waiting for something to be sent on the channel. Technically, we do not
classify this scenario as a deadlocked one. There are many forms of values
that contain channels. For instance, such a value can be a channel itself,
or a closure-function containing a channel in its environment, or a compound
value like a tuple that contains a channel as one part of it, etc. Clearly,
any value containing a channel can only be assigned a true viewtype.

\def\MCH{M}
\def\MCHS{{\cal M}}
The primary technical contribution of the paper lies in the following presented
approach to establishing Lemma~\ref{lemma:langch:deadlock-freeness}. 
Let us use $\MCH$ for sets of (positive and negative) channels and $\MCHS$
for a finite non-empty collection (that is, multiset) of such sets. We say
that $\MCHS$ is {\it regular} if the sets in $\MCHS$ are pairwise disjoint
and each pair of channels $\chpcst$ and $\chncst$ are either both included
in the multiset union $\biguplus(\MCHS)$ of all the sets in $\MCHS$ or both
excluded from it. Of course, $\biguplus(\MCHS)$ is the same as the set
union $\bigcup(\MCHS)$ as the sets in $\MCHS$ are pairwise disjoint.

\def\DFred{\leadsto}
Let $\MCHS$ be a regular collection of channel sets. We say that $\MCHS$
{\it DF-reduces} to $\MCHS'$ via $\chpcst$ if there exist $\MCH_1$ and
$\MCH_2$ in $\MCHS$ such that $\chpcst\in\MCH_1$ and $\chncst\in\MCH_2$ and
$\MCHS'=(\MCHS\backslash\{\MCH_1,\MCH_2\})\cup\{\MCH_{12}\}$, where
$\MCH_{12}=(\MCH_1\cup\MCH_2)\backslash\{\chpcst,\chncst\}$.  We say that
$\MCHS$ DF-reduces to $\MCHS'$ if $\MCHS$ DF-reduces to $\MCHS'$ via some
$\chpcst$. We may write $\MCHS\DFred\MCHS'$ to mean that $\MCHS$
DF-reduces to $\MCHS$. We say that $\MCHS$ is DF-normal if there is no $\MCHS'$
such that $\MCHS\DFred\MCHS'$ holds.

\begin%
{proposition}
Let $\MCHS$ be a regular collection of channel sets.  If $\MCHS$ is
DF-normal, then each set in $\MCHS$ consists of an indefinite number of
channel pairs $\chpcst$ and $\chncst$. In other words, for each $\MCH$
in a DF-normal $\MCHS$, a channel $\chpcst$ is in $\MCH$ if and only if
its dual $\chncst$ is also in $\MCH$.
\end{proposition}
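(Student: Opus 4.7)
The plan is to prove the biconditional $\chpcst\in\MCH \iff \chncst\in\MCH$ directly from the definitions, by contradiction: if some channel in $\MCH$ lacked its dual in the same $\MCH$, the dual would have to live in a different member of $\MCHS$, which would immediately expose a DF-reduction and contradict DF-normality.

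To begin, I fix an arbitrary $\MCH\in\MCHS$ and an arbitrary $\chpcst\in\MCH$, and aim to show $\chncst\in\MCH$. Since $\chpcst\in\bigcup(\MCHS)$, regularity of $\MCHS$ forces $\chncst\in\bigcup(\MCHS)$ as well, so $\chncst\in\MCH'$ for some $\MCH'\in\MCHS$. If $\MCH'\neq\MCH$, then the pair $\MCH,\MCH'$ witnesses a DF-reduction via $\chpcst$, namely $\MCHS\DFred(\MCHS\backslash\{\MCH,\MCH'\})\cup\{(\MCH\cup\MCH')\backslash\{\chpcst,\chncst\}\}$, contradicting the assumption that $\MCHS$ is DF-normal. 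Hence $\MCH'=\MCH$ and $\chncst\in\MCH$, as required. Pairwise disjointness of $\MCHS$ is what guarantees the uniqueness of $\MCH'$ and is used implicitly here.

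For the converse, I fix $\chncst\in\MCH$ and argue symmetrically: regularity yields $\chpcst\in\MCH''$ for some $\MCH''\in\MCHS$, and if $\MCH''\neq\MCH$ then taking $\MCH_1=\MCH''$ and $\MCH_2=\MCH$ in the definition of DF-reduction again produces $\MCHS\DFred\MCHS'$ via $\chpcst$, contradicting DF-normality; hence $\MCH''=\MCH$.

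I do not anticipate any genuine obstacle. The only subtlety worth spelling out is the appeal to pairwise disjointness of the sets in $\MCHS$, which ensures that the $\MCH'$ (respectively $\MCH''$) containing the dual channel is uniquely determined, so the DF-reduction step is well-defined. Everything else is a direct unwinding of the definitions of regularity, DF-reducibility, and DF-normality.
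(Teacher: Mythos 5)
Your proof is correct and is precisely the argument the paper leaves implicit: the paper's proof is the one-liner that the claim ``immediately follows from the definition of DF-reduction,'' and your unwinding (a dual in a different set would witness a DF-reduction, contradicting DF-normality) is exactly that unwinding, with the appeal to pairwise disjointness correctly noted.
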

\begin
{proof}
The proposition immediately follows from the definition of DF-reduction
$\leadsto$.
\end{proof}

\begin%
{definition}
\label{def:DFreducibility}
A regular collection $\MCHS$ of channel sets is DF-reducible if either (1)
each set in $\MCHS$ is empty or (2) $\MCHS$ is not DF-normal and $\MCHS'$
is DF-reducible whenever $\MCHS\DFred\MCHS'$ holds.
\end{definition}
We say that a channel set $\MCH$ is self-looping if it contains both
$\chpcst$ and $\chncst$ for some $\chpcst$.  Obviously, a regular collection
$\MCHS$ of channel sets is not DF-reducible if there is a self-looping $\MCH$
in $\MCHS$.

\begin%
{proposition}
\label{prop:DFreducibility:0}
Let $\MCHS$ be a regular collection of channel sets.  If
$\MCHS$ is DF-reducible and $\MCHS'=\MCHS\backslash\{\emptyset\}$,
then $\MCHS'$ is also DF-reducible.
\end{proposition}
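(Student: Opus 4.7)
The plan is to proceed by structural induction on the derivation establishing that $\MCHS$ is DF-reducible, following Definition~\ref{def:DFreducibility}; this induction is well-founded because each DF-reduction strictly decreases the total number of channels in $\biguplus(\MCHS)$ by two. Before entering the induction I would first check that $\MCHS'$ is regular: the sets of $\MCHS'$ remain pairwise disjoint, and $\biguplus(\MCHS') = \biguplus(\MCHS)$ since removing empty sets does not alter the union, so the pairing condition on $\chpcst$ and $\chncst$ is inherited.

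The base case is when every set in $\MCHS$ is empty (clause~(1) of Definition~\ref{def:DFreducibility}); then $\MCHS'$ likewise consists only of empty sets, and clause~(1) applies directly. For the inductive step, assume $\MCHS$ is not DF-normal and that every $\MCHS_1$ with $\MCHS \DFred \MCHS_1$ is DF-reducible, with the inductive hypothesis further giving that $\MCHS_1 \backslash \{\emptyset\}$ is DF-reducible. A DF-reduction of $\MCHS$ uses sets $\MCH_1, \MCH_2$ containing $\chpcst, \chncst$, both necessarily non-empty and therefore present in $\MCHS'$; lifting this reduction shows that $\MCHS'$ is also not DF-normal.

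Now consider an arbitrary DF-reduction $\MCHS' \DFred \MCHS'_1$ via some $\chpcst$, using the same kind of non-empty sets $\MCH_1, \MCH_2$. These lie in $\MCHS$ as well, yielding a matching reduction $\MCHS \DFred \MCHS_1$. A direct bookkeeping calculation, splitting on whether the merged set $\MCH_{12} = (\MCH_1 \cup \MCH_2) \backslash \{\chpcst, \chncst\}$ is empty, shows that $\MCHS'_1 = \MCHS_1 \backslash \{\emptyset\}$ in the non-empty case, and $\MCHS'_1 = (\MCHS_1 \backslash \{\emptyset\}) \cup \{\emptyset\}$ in the empty case. The first case closes immediately by the inductive hypothesis.

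The only substantive obstacle is the second bookkeeping case, which reduces to the auxiliary statement: if $\MCHS_0$ is DF-reducible, then so is $\MCHS_0 \cup \{\emptyset\}$. I would prove this by a parallel structural induction on the DF-reducibility of $\MCHS_0$: since empty sets never participate in a DF-reduction, the DF-reductions of $\MCHS_0 \cup \{\emptyset\}$ are in bijection with those of $\MCHS_0$ (each reduct simply carrying an extra $\emptyset$), and clause~(1) is trivially preserved when every set of $\MCHS_0$ is already empty. With this auxiliary lemma in hand, the inductive step closes in both cases, establishing $\MCHS'$ as DF-reducible via clause~(2) of Definition~\ref{def:DFreducibility}. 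Everything beyond the small $\MCH_{12}=\emptyset$ bookkeeping is a mechanical matching of DF-reductions between $\MCHS$ and $\MCHS'$.
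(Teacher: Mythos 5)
Your proof is correct. The paper itself offers no argument here --- its proof is literally the single word ``Straightforwardly'' --- so there is nothing to diverge from; your write-up simply supplies the details the authors omitted. The two points you isolate are exactly the ones that need checking: (i) any DF-reduction only ever touches non-empty sets (since $\MCH_1\ni\chpcst$ and $\MCH_2\ni\chncst$), so reductions of $\MCHS$ and of $\MCHS'=\MCHS\backslash\{\emptyset\}$ correspond, with the only mismatch arising when the merged set $\MCH_{12}$ is empty; and (ii) the auxiliary fact that adjoining an $\emptyset$ to a DF-reducible collection preserves DF-reducibility, proved by the same induction. Grounding the induction in the total channel count of $\biguplus(\MCHS)$, which drops by two at each step, makes the whole argument well-founded. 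The only cosmetic quibble is the degenerate case $\MCHS=\{\emptyset\}$, where $\MCHS'$ becomes the empty collection and clause~(1) of Definition~\ref{def:DFreducibility} holds only vacuously; this does not affect the use of the proposition in the proof of Lemma~\ref{lemma:langch:deadlock-freeness}.
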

\begin
{proof}
Straightforwardly.
\end{proof}

\begin%
{proposition}
\label{prop:DFreducibility:1}
Let $\MCHS$ be a regular collection of channel sets.  If
$\MCHS\DFred\MCHS'$ and $\MCHS'$ is DF-reducible, then $\MCHS$ is also
DF-reducible.
\end{proposition}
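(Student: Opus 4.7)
The plan is to proceed by strong induction on $n$, the total number of channels appearing in $\MCHS$. Each application of $\DFred$ strictly decreases this count by exactly two (one $\chpcst$ and one $\chncst$ are removed), so the induction is well-founded, and the base case $n = 0$ is vacuous because no DF-reduction is then available. For the inductive step, I must show both that $\MCHS$ is not DF-normal (immediate from the hypothesis $\MCHS \DFred \MCHS'$) and that \emph{every} collection $\MCHS''$ with $\MCHS \DFred \MCHS''$ is DF-reducible.

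The central technical step is a local confluence (``diamond'') lemma for $\DFred$. It is convenient to view each dual channel pair $(\chpcst, \chncst)$ in $\MCHS$ as an edge joining the unique set containing $\chpcst$ to the unique set containing $\chncst$ (uniqueness comes from regularity); a DF-reduction is then the contraction of a non-loop edge. The claim is: given two distinct reducts $\MCHS \DFred \MCHS_1$ via $\chpcst^1$ and $\MCHS \DFred \MCHS_2$ via $\chpcst^2$, either (a) the two edges share both endpoint sets, in which case contracting one turns the other into a self-loop and so both $\MCHS_1$ and $\MCHS_2$ contain a self-looping set, or (b) the contractions commute, i.e., there exists a common $\MCHS_3$ with $\MCHS_1 \DFred \MCHS_3$ via $\chpcst^2$ and $\MCHS_2 \DFred \MCHS_3$ via $\chpcst^1$.

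With the diamond in hand, fix any $\MCHS''$ with $\MCHS \DFred \MCHS''$ via some $\chpcst^2$. If $\chpcst^2 = \chpcst^1$, regularity forces $\MCHS'' = \MCHS'$ and we are done. Otherwise, case (a) cannot hold: it would make $\MCHS'$ self-looping and hence, by the remark following Definition~\ref{def:DFreducibility}, not DF-reducible, contradicting the hypothesis. So case (b) applies, giving $\MCHS' \DFred \MCHS_3$ and $\MCHS'' \DFred \MCHS_3$. Since $\MCHS'$ is DF-reducible, clause~(2) of Definition~\ref{def:DFreducibility} yields that $\MCHS_3$ is DF-reducible; and since $\MCHS''$ has $n - 2 < n$ channels, the induction hypothesis applied to $\MCHS''$ (with witness $\MCHS_3$) concludes that $\MCHS''$ is DF-reducible. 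The main obstacle will be the careful bookkeeping inside the diamond lemma, where one must verify that the two orders of contraction produce literally the same multiset $\MCHS_3$, handling the subcases in which the endpoint-set pairs $\{\MCH^1_a, \MCH^1_b\}$ and $\{\MCH^2_a, \MCH^2_b\}$ are either disjoint or share a single common set; both are routine set-algebraic computations, but some care is needed for sets that become empty under contraction.
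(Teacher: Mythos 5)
Your proposal is correct and follows essentially the same route as the paper: induction on a size measure of $\MCHS$ combined with a local confluence (diamond) argument for $\DFred$, closing the square via clause~(2) of Definition~\ref{def:DFreducibility} and the induction hypothesis. The only difference is cosmetic --- you induct on the number of channels rather than the size of the collection, and you make explicit the degenerate case where the two contracted pairs join the same two sets (which the paper's ``it can be readily verified'' leaves implicit, and which is indeed excluded precisely because it would make $\MCHS'$ self-looping and hence not DF-reducible).
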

\begin
{proof}
Clearly, $\MCHS\DFred\MCHS'$ via some $\chpcst$.  Assume
$\MCHS\DFred\MCHS_1$ via $\chpcst_1$ for some $\MCHS_1$ and $\chpcst_1$.
If $\chpcst$ and $\chpcst_1$ are the same, then $\MCHS_1$ is DF-reducible
as it is the same as $\MCHS'$. Otherwise, it can be readily verified that
there exists $\MCHS'_1$ such that $\MCHS_1\DFred\MCHS'_1$ via $\chpcst$ and
$\MCHS'\DFred\MCHS'_1$ via $\chpcst_1$. Clearly, the latter implies
$\MCHS'_1$ being DF-reducible.  Note that the size of $\MCHS_1$ is strictly
less than that of $\MCHS$.  By induction hypothesis on $\MCHS_1$, we have
$\MCHS_1$ being DF-reducible.  By definition, $\MCHS$ is DF-reducible.
\end{proof}

\begin%
{proposition}
\label{prop:DFreducibility:2}
Let $\MCHS$ be a regular collection of channel sets that is DF-reducible.
If $\MCH_1$ and $\MCH_2$ in $\MCHS$ contain $\chpcst$ and $\chncst$,
respectively, then
$\MCHS'=(\MCHS\backslash\{\MCH_1,\MCH_2\})\cup\{\MCH'_1,\MCH'_2\}$ is also
DF-reducible, where $\MCH'_1=\MCH_1\backslash\{\chpcst\}$ and
$\MCH'_2=\MCH_2\backslash\{\chncst\}$.
\end{proposition}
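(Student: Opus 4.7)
The plan is to prove this by strong induction on the total number of channels in $\MCHS$. The no-channels base case is vacuous, since $\chpcst\in\MCH_1$ cannot hold there.

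A key preliminary observation: since $\MCHS$ is DF-reducible and contains the pair $\chpcst,\chncst$ in distinct sets, case~(2) of Definition~\ref{def:DFreducibility} applies, so DF-reducing $\MCHS$ via $\chpcst$ yields $\MCHS_0=(\MCHS\setminus\{\MCH_1,\MCH_2\})\cup\{\MCH_{12}\}$ with $\MCH_{12}=\MCH'_1\cup\MCH'_2$, and $\MCHS_0$ must itself be DF-reducible. Hence $\MCH_{12}$ is not self-looping, which means $\MCH'_1$ and $\MCH'_2$ share no pair of mutually dual channels. Similarly $\MCH'_1$ and $\MCH'_2$ are individually non-self-looping, and all other sets of $\MCHS'$ (unchanged from $\MCHS$) are non-self-looping too.

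For the inductive step, if $\MCHS'$ is DF-normal, then by Proposition~\ref{prop:DFreducibility:0}'s predecessor each set in $\MCHS'$ would consist entirely of dual pairs; together with the non-self-looping property this forces every set of $\MCHS'$ to be empty, so clause~(1) of Definition~\ref{def:DFreducibility} applies. Otherwise I pick an arbitrary reduction $\MCHS'\DFred\MCHS''$ via some $\chpcst_0$, with $\chpcst_0\in N_1$ and $\chncst_0\in N_2$ in $\MCHS'$. The preliminary observation rules out the case $N_1=\MCH'_1$, $N_2=\MCH'_2$. In every other case, at least one of $N_1,N_2$ is inherited unchanged from $\MCHS$ and the channels $\chpcst_0,\chncst_0$ both sit inside $\MCHS$ (since $\MCH'_i\subseteq\MCH_i$); so $\MCHS$ admits the same reduction $\MCHS\DFred\hat\MCHS$ via $\chpcst_0$. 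Being a DF-reduct of $\MCHS$, $\hat\MCHS$ is DF-reducible and has strictly fewer channels, yet still contains $\chpcst$ and $\chncst$ in distinct sets. The induction hypothesis therefore gives DF-reducibility of the split of $\hat\MCHS$ obtained by removing $\chpcst$ and $\chncst$, and a direct multiset check identifies this split with $\MCHS''$.

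The main obstacle I anticipate is the multiset bookkeeping in the mixed case, where exactly one of $N_1,N_2$ equals $\MCH'_1$ or $\MCH'_2$: one must carefully verify that splitting $\chpcst,\chncst$ out of $\hat\MCHS$ produces the very same multiset of sets as $\MCHS''$, since the two orderings of reductions (first $\chpcst$ then $\chpcst_0$, versus first $\chpcst_0$ then $\chpcst$) merge different pairs of sets along the way and the resulting computation must commute.
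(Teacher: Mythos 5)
Your proof is correct and takes essentially the same route as the paper, whose entire argument is the single sentence ``a straightforward induction on the size of the set union $\bigcup(\MCHS)$'' --- precisely the induction on the total channel count that you carry out, with the commutation of the split against an arbitrary DF-reduction of $\MCHS'$ as the substantive step. One cosmetic remark: your preliminary observation (that $\MCH'_1$ and $\MCH'_2$ share no mutually dual pair) in fact excludes both orientations $\{N_1,N_2\}=\{\MCH'_1,\MCH'_2\}$, and you need both exclusions --- not just $N_1=\MCH'_1$, $N_2=\MCH'_2$ --- to justify the claim that in every remaining case at least one of $N_1,N_2$ is inherited unchanged from $\MCHS$.
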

\begin%
{proof}
The proposition follows from a straightforward induction on the size of
the set union $\bigcup(\MCHS)$.
\end{proof}
\begin%
{lemma}
\label{lemma:DFreducibility:3}
Let $\MCHS$ be a regular collection of $n$ channel sets
$\MCH_1,\ldots,\MCH_n$ for some $n\geq 1$.  If the union
$\bigcup(\MCHS)=\MCH_1\cup\ldots\cup\MCH_n$ contains at least $n$ channel
pairs $(\chpcst_1,\chncst_1),\ldots,(\chpcst_n,\chncst_n)$, then $\MCHS$ is
not DF-reducible.
\end{lemma}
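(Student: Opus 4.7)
The plan is to proceed by induction on $n$, the number of channel sets in $\MCHS$. This is the right measure because each DF-reduction merges two distinct sets into one (so the set count drops by exactly one) while discarding exactly one channel pair $(\chpcst,\chncst)$ from $\bigcup(\MCHS)$; regularity is preserved because every other pair still satisfies the both-in-or-both-out condition. Hence a DF-reduction sends a regular $n$-set, $\geq n$-pair collection to a regular $(n-1)$-set, $\geq (n-1)$-pair collection. Recall from Definition~\ref{def:DFreducibility} that to establish $\MCHS$ is not DF-reducible, we need some $\MCH\in\MCHS$ to be non-empty and either $\MCHS$ to be DF-normal or some DF-reduct of $\MCHS$ to itself be not DF-reducible.

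For the base case $n=1$, the unique set $\MCH_1$ contains the pair $(\chpcst_1,\chncst_1)$, hence it is self-looping and non-empty. No DF-reduction is possible, since that would require $\chpcst_1$ and $\chncst_1$ to lie in two distinct sets of $\MCHS$. Therefore $\MCHS$ is DF-normal and not all-empty, so it is not DF-reducible.

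For the inductive step $n\geq 2$, the union $\bigcup(\MCHS)$ contains at least $2n>0$ distinct channels, so by pairwise disjointness at least one $\MCH_i$ is non-empty. If $\MCHS$ happens to be DF-normal we are immediately done. Otherwise pick any DF-reduction $\MCHS\DFred\MCHS'$; by the bookkeeping above, $\MCHS'$ is a regular collection of $n-1$ sets whose union contains at least $n-1$ channel pairs, so the induction hypothesis yields that $\MCHS'$ is not DF-reducible. Thus clause~(2) of Definition~\ref{def:DFreducibility} fails for $\MCHS$, and $\MCHS$ is itself not DF-reducible. The only mildly subtle step is the bookkeeping in the first paragraph—verifying that DF-reduction preserves regularity and drops both the set count and the channel-pair count by exactly one—but once this is in hand the induction closes mechanically.
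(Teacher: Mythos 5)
Your proof is correct and follows essentially the same route as the paper: induction on $n$ with the invariant that a DF-reduction turns a regular collection of $n$ sets whose union contains at least $n$ channel pairs into one with $n-1$ sets and at least $n-1$ pairs. The only (immaterial) difference is that you verify this for \emph{every} DF-reduct, whereas the paper first splits on whether some set is self-looping and then exhibits a single reduct via $\chpcst_1$ that fails to be DF-reducible --- either version falsifies clause~(2) of Definition~\ref{def:DFreducibility}.
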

\begin%
{proof}
By induction on $n$.  If $n=1$, then $\MCHS$ is not DF-reducible as
$\MCH_1$ is self-looping.  Assume $n > 1$.  If either $\MCH_1$ or $\MCH_2$
is self-looping, then $\MCHS$ is not DF-reducible.  Otherwise, we may
assume that $\chpcst_1\in\MCH_1$ and $\chncst_1\in\MCH_2$ without loss of
generality.  Then $\MCHS$ DF-reduces to $\MCHS'$ via $\chpcst_1$ for some
$\MCHS'$ containing $n-1$ channel sets. Note that $\bigcup(\MCHS')$
contains at least $n-1$ channel pairs
$(\chpcst_2,\chncst_2),\ldots,(\chpcst_n,\chncst_n)$. By induction
hypothesis, $\MCHS'$ is not DF-reducible. So $\MCHS$ is not DF-reducible,
either.
\end{proof}
\def\fMCH{\rho_{\it CH}}
\def\fMCHS{{\cal R}_{\it CH}}
Given an expression $\exp$ in $\langch$, we use
$\fMCH(\exp)$ for the set of channels contained in $\exp$.
Given a pool $\Pi$ in $\langch$, we use $\fMCHS(\Pi)$ for the
collection of $\fMCH(\Pi(tid))$, where $\tid$ ranges over $\dom(\Pi)$.
\begin%
{lemma}
\label{lemma:DFreducibility:4}
If $\fMCHS(\Pi)$ is DF-reducible and $\Pi$ evaluates to $\Pi'$, then
$\fMCHS(\Pi')$ is also DF-reducible.
\end{lemma}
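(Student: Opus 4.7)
The plan is to case-split on the pool-evaluation rule used for $\Pi \eval \Pi'$ and, in each case, reduce to one of Propositions~\ref{prop:DFreducibility:0}, \ref{prop:DFreducibility:1}, and~\ref{prop:DFreducibility:2}. Three cases are immediate. For (PR0), no channels are created or destroyed: pure redexes substitute linearly (no duplication or erasure of channel-carrying subterms, since these have true viewtypes), and ad-hoc redexes only fire on channel-agnostic primitives — the channel primitives $\fclosepos$, $\fcloseneg$, $\fsendpos$, $\fsendneg$, $\frecvpos$, $\frecvneg$, and $\fchnegcreate$ stay blocked until (PR3) or (PR4-*) fires — so $\fMCHS(\Pi') = \fMCHS(\Pi)$. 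For (PR2), the discarded thread holds $\dunit$ and contributes an $\emptyset$ to $\fMCHS(\Pi)$; dropping an empty channel set is harmless by Proposition~\ref{prop:DFreducibility:0}. For (PR4-clos), the pair $(\chpcst, \chncst)$ is erased from the two sets containing it, which is exactly the conclusion of Proposition~\ref{prop:DFreducibility:2}.

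The two interesting cases are (PR4-send)/(PR4-recv) and (PR3), both of which leave $\chpcst$ and $\chncst$ in $\fMCHS(\Pi')$ so that Proposition~\ref{prop:DFreducibility:1} can be invoked. For (PR4-send) — and symmetrically (PR4-recv) — the transferred value $v$ moves the channel subset $C = \fMCH(v)$ from thread $\tid_1$ to thread $\tid_2$, so $\MCH_{\tid_1}' = \MCH_{\tid_1}\setminus C$ and $\MCH_{\tid_2}' = \MCH_{\tid_2}\cup C$. By regularity of $\fMCHS(\Pi)$ we have $C\cap \MCH_{\tid_2} = \emptyset$, whence
\[
(\MCH_{\tid_1}\setminus C)\cup(\MCH_{\tid_2}\cup C) \;=\; \MCH_{\tid_1}\cup \MCH_{\tid_2}.
\]
Thus DF-reducing $\fMCHS(\Pi')$ via $\chpcst$ yields precisely the same collection as DF-reducing $\fMCHS(\Pi)$ via $\chpcst$, which is DF-reducible by hypothesis; Proposition~\ref{prop:DFreducibility:1} finishes the case. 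For (PR3), the fresh pair $(\chpcst, \chncst)$ created by $\fchnegcreate$ straddles the two newly split sets, and DF-reducing $\fMCHS(\Pi')$ via this fresh $\chpcst$ merges the two sets and erases the fresh pair, returning exactly $\fMCHS(\Pi)$; Proposition~\ref{prop:DFreducibility:1} applies again.

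The delicate case is (PR1), where $\MCH_{\tid_0}$ is split into $\fMCH(E)$ and $\fMCH(\lam{x}{e})$ without any channel pair connecting the two halves, so the (PR3) trick does not directly apply. The key idea is to insert a phantom pair and then remove it: let $\chpcst^\star, \chncst^\star$ be a fresh dual pair not occurring in $\fMCHS(\Pi)$, and set
\[
\MCHS^\star \;=\; \bigl(\fMCHS(\Pi)\setminus\{\MCH_{\tid_0}\}\bigr) \,\cup\, \bigl\{\fMCH(E)\cup\{\chncst^\star\},\; \fMCH(\lam{x}{e})\cup\{\chpcst^\star\}\bigr\}.
\]
Then $\MCHS^\star \DFred \fMCHS(\Pi)$ via $\chpcst^\star$, so Proposition~\ref{prop:DFreducibility:1} gives that $\MCHS^\star$ is DF-reducible; and $\fMCHS(\Pi')$ is obtained from $\MCHS^\star$ by erasing the matching fresh pair, so Proposition~\ref{prop:DFreducibility:2} yields DF-reducibility of $\fMCHS(\Pi')$. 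The main obstacle is precisely this (PR1) case, and the phantom-pair construction is the crucial insight; a secondary subtlety is justifying that (PR0) really cannot silently create, duplicate, or erase a channel, which ultimately rests on linearity of channel viewtypes together with the fact that all channel primitives reduce only through (PR3) and (PR4-*).
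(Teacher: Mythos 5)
Your proof takes essentially the same route as the paper's: a case analysis on the pool-evaluation rule, with (PR4-clos) discharged by Proposition~\ref{prop:DFreducibility:2}, and (PR3) and (PR4-send)/(PR4-recv) discharged by exhibiting a DF-reduction from $\fMCHS(\Pi')$ to a collection already known to be DF-reducible and then invoking Proposition~\ref{prop:DFreducibility:1}. The one substantive difference is (PR1): the paper simply asserts that $\fMCHS(\Pi)$ and $\fMCHS(\Pi')$ coincide for the non-channel rules, which is not literally true for thread creation, since the spawning thread's channel set $\fMCH(E)\cup\fMCH(\lam{x}{e})$ is split into two sets in $\fMCHS(\Pi')$; your phantom-pair construction (insert a fresh dual pair $\chpcst^\star,\chncst^\star$ straddling the split, recover DF-reducibility of the augmented collection via Proposition~\ref{prop:DFreducibility:1}, then erase the pair via Proposition~\ref{prop:DFreducibility:2}) is a correct and genuinely needed repair of that gap, and is the only point where your argument goes beyond what the paper makes explicit.
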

\begin%
{proof}
Note that $\fMCHS(\Pi)$ and $\fMCHS(\Pi')$ are the same unless
$\Pi$ evaluates
to $\Pi'$ according to one of the rules $\mbox{PR3}$, $\mbox{PR4-clos}$,
$\mbox{PR4-send}$, and $\mbox{PR4-recv}$.
\begin%
{itemize}
\item
For the rule $\mbox{PR3}$: We have $\fMCHS(\Pi')\DFred\fMCHS(\Pi)$ via the
newly introduced channel $\chpcst$. By
Proposition~\ref{prop:DFreducibility:1}, $\fMCHS(\Pi')$ is DF-reducible.
\item
For the rule $\mbox{PR4-clos}$: We have that $\fMCHS(\Pi')$ is DF-reducible
by Proposition~\ref{prop:DFreducibility:2}.
\item
For the rule $\mbox{PR4-send}$: Let $\chpcst$ be the channel on which a
value is sent when $\Pi$ evaluates to $\Pi'$. Note that this value can
itself be a channel or contain a channel. We have $\fMCHS(\Pi)\DFred\MCHS$
via $\chpcst$ for some $\MCHS$. So $\MCHS$ is DF-reducible by definition.
Clearly, $\fMCHS(\Pi')\DFred\MCHS$ via $\chpcst$ as well.  By
Proposition~\ref{prop:DFreducibility:2}, $\fMCHS(\Pi')$ is DF-reducible.
\item For the rule $\mbox{PR4-recv}$: This case is similar to the previous one.
\end{itemize}
\end{proof}
We are now ready to give a proof for
Lemma\ref{lemma:langch:deadlock-freeness}:
\vspace{6pt}

\noindent%
{\it Proof}~~
Note that any channel, either positive or negative, can appear at most once
in $\fMCHS(\Pi)$, and a channel $\chpcst$ appears in $\fMCHS(\Pi)$ if and
only if its dual $\chncst$ also appears in $\fMCHS(\Pi)$. In addition, any
positive channel $\chpcst$ being assigned a type of the form $\tchpos(S)$
in the type derivation of $\Pi$ for some session type $S$ mandates that its
dual $\chncst$ be assigned the type of the form $\tchneg(S)$.

Assume that $\Pi(\tid)$ is a blocked expression for each
$\tid\in\dom(\Pi)$. If the partial redex in $\Pi(\tid_1)$ involves a
positive channel $\chpcst$ while the partial redex in $\Pi(\tid_2)$
involves its dual $\chncst$, then these two partial redexes must
match. This is due to $\Pi$ being well-typed. In other words, the ids
of the channels involved in the partial redexes of $\Pi(\tid)$ for
$\tid\in\dom(\Pi)$ are all distinct. This simply implies that there
are $n$ channel pairs $(\chpcst,\chncst)$ in $\bigcup(\fMCHS(\Pi))$
for some $n$ greater than or equal to the size of $\Pi$.  By
Lemma~\ref{lemma:DFreducibility:3}, $\fMCHS(\Pi)$ is not reducible.
On the other hand, $\fMCHS(\Pi)$ is reducible by
Lemma~\ref{lemma:DFreducibility:4} as $\Pi_0$ evaluates to $\Pi$ (in
many step) and $\fMCHS(\Pi_0)$ (containing only sets that are empty)
is reducible. This contradiction indicates that there exist $\tid_1$
and $\tid_2$ such that $\Pi(\tid_1)$ and $\Pi(\tid_2)$ are matching
blocked expressions. Therefore $\Pi$ evaluates to $\Pi'$ for some pool
$\Pi'$ according to one of the rules $\mbox{PR4-clos}$,
$\mbox{PR4-send}$, and $\mbox{PR4-recv}$.

With Proposition~\ref{prop:DFreducibility:0}, the case can be handled
similarly where $\Pi(0)$ is a value containing no channels and $\Pi(\tid)$
is a blocked expression for each positive
$\tid\in\dom(\Pi)$.~\hfill\fillsquare

\vspace{6pt}
Please assume for the moment that we would like to add into $\langch$ a function
$\fchnegcreatetwo$ of the following type:
\[
\begin%
{array}{c}
((\tchpos(\stvar_1),\tchpos(\stvar_2))\ltimp\tunit)\ctimp(\tchneg(\stvar_1),\tchneg(\stvar_2)) \\
\end{array}\]
One may think of $\fchnegcreatetwo$ as a slight generalization of
$\fchnegcreate$ that creates in a single call two channels instead of one.
Unfortunately, adding $\fchnegcreatetwo$ into $\langch$ can potentially
cause a deadlock. For instance, we can easily imagine a scenario where the
first of the two channels $(\chncst_1,\chncst_2)$ returned from a call to
$\fchnegcreatetwo$ is used to send the second to the newly created thread
by the call, making it possible for that thread to cause a deadlock by
waiting for a value to be sent on $\chpcst_2$.  Clearly,
Lemma~\ref{lemma:DFreducibility:4} is invalidated if $\fchnegcreatetwo$ is
added.

\vspace{6pt}
\noindent%
The soundness of the type system of $\langch$ rests upon the following
two theorems
(corresponding to
Theorem~\ref{theorem:langz:subject_reduction_on_pools}
and Theorem~\ref{theorem:langz:progress_on_pools}):
\begin%
{theorem}
(Subject Reduction on Pools)
\label{theorem:langch:subject_reduction_on_pools}
Assume that $\tpjg\pool_1:\vwty$ is derivable and
$\pool_1\eval\pool_2$ such that $\resof(\pool_2)\in\RES$.
Then $\tpjg\pool_2:\vwty$ is derivable.
\end{theorem}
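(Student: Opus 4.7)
The plan is to proceed by structural induction on the derivation of $\tpjg\pool_1:\vwty$, with case analysis on the evaluation rule applied in $\pool_1\eval\pool_2$. The cases (PR0), (PR1), and (PR2) are inherited unchanged from the proof of Theorem~\ref{theorem:langz:subject_reduction_on_pools}, with (PR0) still requiring the Substitution Lemma~\ref{lemma:langz:substitution} whenever a pure redex fires; only the four new rules (PR3), (PR4-clos), (PR4-send), and (PR4-recv) require fresh treatment.

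For (PR3), suppose $\pool_1(\tid_0)=E[\fchnegcreate(\lam{x}{e})]$. By inversion on the typing of $E[\,\cdot\,]$, using the c-type of $\fchnegcreate$ and the (ty-cst) rule, the call has some type $\tchneg(\ST)$ in which the argument $\lam{x}{e}$ has type $\tchpos(\ST)\ltimp\tunit$. In $\pool_2$ we extend the signature $\SIG$ by setting $\chpcst:\tchpos(\ST)$ and $\chncst:\tchneg(\ST)$, which form a matching dual pair. Retyping $\pool_2(\tid_0)=E[\chncst]$ then consists in replacing the subderivation for $\fchnegcreate(\lam{x}{e}):\tchneg(\ST)$ with an instance of (ty-res) applied to $\chncst$. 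For the fresh thread $\app{\lam{x}{e}}{\chpcst}:\tunit$, we apply (ty-app-l) to the typings of $\lam{x}{e}:\tchpos(\ST)\ltimp\tunit$ and of $\chpcst$ via (ty-res).

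For (PR4-clos), (PR4-send), and (PR4-recv) the strategy is uniform: invert the typings of the two threads involved, use the c-types of the matching primitives to observe that the session types align, and then reassemble the typing derivations with $\chpcst$ and $\chncst$ reassigned at their continuation session types. In (PR4-send), for instance, the sending thread has $\fsendpos(\chpcst,\val)$ with $\chpcst:\tchpos(\chsnd{\vwty}{\ST})$ and $\val:\vwty$, while the receiving thread has $\fsendneg(\chncst)$ with $\chncst:\tchneg(\chsnd{\vwty}{\ST})$; after the step we reassign $\chpcst:\tchpos(\ST)$ and $\chncst:\tchneg(\ST)$ (still dual), retype $E_1[\chpcst]$ directly, and retype $E_2[(\chncst,\val)]$ at type $\tchneg(\ST)\otimes\vwty$, which is exactly the type that the call $\fsendneg(\chncst)$ had originally in $E_2$. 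The linear bookkeeping works out because the resources of $\val$ simply migrate from thread $\tid_1$ to thread $\tid_2$, and the assumption $\resof(\pool_2)\in\RES$ secures global validity. The (PR4-recv) case is symmetric, and (PR4-clos) is easier still, as the two threads just swap their calls for $\dunit:\tunit$.

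The main obstacle is making the session-type duality invariant explicit. Specifically, the argument relies on the convention, stated informally in Section~\ref{section:langch}, that in any derivation of $\tpjg\Pi:\vwty$ with $\resof(\Pi)\in\RES$ a positive channel $\chpcst$ and its dual $\chncst$ are assigned matching types $\tchpos(\ST)$ and $\tchneg(\ST)$; without this invariant, one cannot coherently re-assign both channels to their continuation types $\tchpos(\ST)$ and $\tchneg(\ST)$ simultaneously across the two threads. Once this invariant is isolated as a lemma (provable by straightforward induction on typing derivations), the rest of the proof reduces to routine context-splitting bookkeeping, guided by the c-types of the primitive session-typed functions.
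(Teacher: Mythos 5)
Your proposal is correct and follows essentially the same route as the paper, whose own proof is only a two-line sketch stating that the argument is the same as for $\langz$ plus a check that the rules (PR3), (PR4-clos), (PR4-send), and (PR4-recv) are consistent with the typing rules; your case analysis is precisely that check, carried out in detail. Your observation that the duality invariant (a positive channel and its dual being assigned matching types $\tchpos(\ST)$ and $\tchneg(\ST)$) must be maintained as an explicit induction invariant is a genuine point the paper leaves implicit, and isolating it as a lemma is the right way to make the sketch rigorous.
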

\begin%
{proof}
The proof is essentially the same as the one for
Theorem~\ref{theorem:langz:subject_reduction_on_pools}.  The only
additional part is for checking that the rules $\mbox{PR3}$,
$\mbox{PR4-clos}$, $\mbox{PR4-send}$, and $\mbox{PR4-recv}$ are all
consistent with respect to the typing rules listed in
Figure~\ref{figure:langz:typing_rules}.
\end{proof}

\begin%
{theorem}
(Progress Property on Pools)
\label{theorem:langch:progress_on_pools}
Assume that $\tpjg\pool_1:\vwty$ is derivable and
$\resof(\pool_1)$ is valid.
Also assume that $\resof(v)$ contains no channels
for every value $v$ of the type $\vwty$. Then we have
the following possibilities:
\begin{itemize}
\item
$\pool_1$ is a singleton mapping $[0\mapsto\val]$ for some $\val$, or
\item
$\pool_1\eval\pool_2$ holds for some $\pool_2$ such that $\resof(\pool_2)\in\RES$.
\end{itemize}
\end{theorem}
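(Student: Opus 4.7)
The plan is to follow the same structural strategy as Theorem~\ref{theorem:langz:progress_on_pools} but route the interesting new case through Lemma~\ref{lemma:langch:deadlock-freeness}. First I would establish the per-thread classification: by induction on the typing derivation (together with Lemma~\ref{lemma:langz:canonical}), each $\pool_1(\tid)$ is either a value or can be uniquely decomposed as $E[\exp]$, where $\exp$ is either (i) a pure redex, (ii) an ad-hoc redex $\cfun(\val_1,\ldots,\val_n)$ for a constant function other than the new session-typed primitives, (iii) a call to $\fthreadcreate$ or $\fchnegcreate$, or (iv) one of the six partial redexes $\fclosepos(\chpcst)$, $\fcloseneg(\chncst)$, $\fsendpos(\chpcst,\val)$, $\frecvpos(\chpcst)$, $\fsendneg(\chncst)$, $\frecvneg(\chncst,\val)$. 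The Canonical Forms Lemma, applied to each $\pool_1(\tid)$ with $\tid>0$ at type $\tunit$, also forces every non-main value thread to be $\dunit$.

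Next I would dispatch the easy cases. If $\pool_1$ is the singleton $[0\mapsto\val]$, we are done. If some $\pool_1(\tid)$ with $\tid>0$ is a value, then by the observation above it is $\dunit$, and rule (PR2) reduces $\pool_1$; resource validity of $\pool_2$ is immediate since $\resof(\dunit)=\emptyset$. If some $\pool_1(\tid)=E[\exp]$ for a pure or ad-hoc redex $\exp$, rule (PR0) applies; for the ad-hoc case the standing assumption on constant functions and $\RES$ guarantees $\resof(\pool_2)\in\RES$. If some $\pool_1(\tid)=E[\fthreadcreate(\lam{x}{e})]$ or $E[\fchnegcreate(\lam{x}{e})]$, then rule (PR1) or (PR3) applies respectively; for (PR3) resource validity follows because the two fresh channels $\chpcst,\chncst$ are dual and may be added together to the multiset.

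The remaining, substantive case is when $\pool_1(0)$ is either a value or a blocked expression and every $\pool_1(\tid)$ with $\tid>0$ is a blocked expression. Here the hypothesis that values of type $\vwty$ contain no channels is essential: it ensures that if $\pool_1(0)$ happens to be a value, then it contains no channels, exactly matching the precondition of Lemma~\ref{lemma:langch:deadlock-freeness}. Since $\pool_1$ has been obtained from an initial pool with no channels (all channels arise via (PR3)), that lemma yields two thread ids $\tid_1,\tid_2$ whose blocked expressions match, so one of (PR4-clos), (PR4-send), (PR4-recv) fires and produces $\pool_2$. Resource validity of $\pool_2$ follows because these rules either consume matched dual channel resources (closing) or rearrange existing channels and values across threads without introducing new resources, so $\resof(\pool_2)\subseteq\resof(\pool_1)\in\RES$.

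The main obstacle is the very last case: one must justify the appeal to Lemma~\ref{lemma:langch:deadlock-freeness} by arguing, using Theorem~\ref{theorem:langch:subject_reduction_on_pools} along the evaluation chain from an initial channel-free pool to $\pool_1$, that the lemma's precondition on the origin of $\pool_1$ is preserved. For a clean formulation one should strengthen the induction slightly (or state the theorem relative to a fixed initial pool $\pool_0$) so that the ``obtained from evaluating a channel-free initial pool'' hypothesis is carried through the proof; otherwise the bare statement above does not supply that history. Once this bookkeeping is in place, the combination of Canonical Forms, the classification of well-typed expressions into values, redexes, and partial redexes, and Lemma~\ref{lemma:langch:deadlock-freeness} closes every case.
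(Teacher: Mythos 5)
Your proof takes essentially the same route as the paper's: the paper's own argument simply says to follow the structure of Theorem~\ref{theorem:langz:progress_on_pools} (classification of each thread as a value or as $E[\exp]$ for a redex or partial redex, via Canonical Forms) and to invoke Lemma~\ref{lemma:langch:deadlock-freeness} for the case where all threads, possibly excluding the main one, are blocked --- which is exactly your case analysis. Your closing remark about needing to carry the ``obtained from evaluating a channel-free initial pool'' hypothesis through the argument is a genuine subtlety that the paper's one-line proof leaves implicit, and your suggested fix (stating the theorem relative to a fixed initial pool) is reasonable.
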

\begin%
{proof}
The proof follows the same structure as the one for
Theorem~\ref{theorem:langz:progress_on_pools}.
Lemma~\ref{lemma:langch:deadlock-freeness} is needed to handle the
case where all of the threads (possibly excluding the main thread) in a
pool consist of blocked expressions.
\end{proof}

\section%
{Additional Features for $\langch$}
\label{section:additional}
We briefly mention certain additional features for $\langch$
that are to be used in some examples presented later.
\label{section:langch2}
\subsection{Bidirectional Forwarding}
There is a special primitive function of the name $\fchposneglink$ for
connecting a positive channel with a negative channel specified by the
following session type:
\[
\begin%
{array}{rcl}
\fchposneglink & : &
(\tchpos(\stvar), \tchneg(\stvar))\ctimp\tunit \\
\end{array}\]
Given a positive channel and a negative channel, $\fchposneglink$
sends each value received from the positive channel onto the negative
channel and vice versa. In other words, $\fchposneglink$ does
bidirectional forwarding between these two channels. In practice,
$\fchposneglink$ is often used to implement delegation of service.  It
can be readily verified that the two channels passed to a call to
$\fchposneglink$ can never have the same channel id; if one is
$\chpcst_1$, then the other must be $\chncst_2$ for some $\chpcst_2$
distinct from $\chpcst_1$. Calling $\fchposneglink$ on a positive
channel and its dual surely results in a deadlock. One of the
evaluation rules for $\fchposneglink$ is given as follows:
\begin{center}
\fontsize{8}{8}\selectfont
\[\begin{array}{c}
\infer
      {\pool\eval\pool[\tid_1:=E_1[\chncst_1]][\tid_2:=E_2[\exp_2]][\tid_3:=E_3[(\chpcst_2,\val)]]}
      {\pool(\tid_1)=E_1[\exp_1]&&\pool(\tid_2)=E_2[\exp_2]&&\pool(\tid_3)=E_3[\exp_3]} \\[6pt]
\end{array}\]
\end{center}
where we have $\exp_1=\frecvneg(\chncst_1, \val)$,
$\exp_2=\fchposneglink(\chpcst_1,\chncst_2)$, and
$\exp_3=\frecvpos(\chpcst_2)$. The other ones are omitted.  It should be
clear that Lemma~\ref{lemma:DFreducibility:4} still holds after
$\fchposneglink$ is added, and thus
Lemma~\ref{lemma:langch:deadlock-freeness} still holds as well.

\subsection{User-Defined Datatypes}
The kind of (recursive) datatypes in ML (for tagged unions) can be added
into $\langch$ without any difficulty. In terms of theory, it is
straightforward to support user-defined (recursive) session datatypes in
$\langch$, allowing sessions of indefinite length to be
specified. Essentially, all we need is to add folding/unfolding rules for
handling recursive session types. As for implementation, we currently
support recursive session types based on an indirect approach, which is
illustrated through the examples presented in
Section~\ref{section:examples}.

\subsection{Quantification over Types}
We can readily incorporate both universally and existentially quantified
types into $\langch$. In ATS, there are predicative quantification and
impredicative quantification. The former is for dependent types (of
DML-style~\cite{XP99,DML-jfp07}) while the latter for parametric
polymorphism. For instance, the example presented in
Section~\ref{subsection:examples:queue} makes use of both parametric
polymorphism and DML-style dependent types.  In terms of theory, we can
readily incorporate quantified session types into $\langch$. As for
implementation, we have not yet attempted to add into ATS direct support
for programming with quantified session types. Instead, we rely on an
indirect approach to do so, which is illustrated in
Section~\ref{section:LLinterp}.

\def\tservice{{\bf service}}
\section%
{Interpreting Linear Logic Connectives}
\label{section:LLinterp}
\def\limplies{\multimap}
Unlike logic-based formalizations of session
types~\cite{ToninhoCP12,Wadler12}, we have not introduced session type
constructors that are directly based on or related to logic connectives in
linear logic. In this section, we interpret in $\langch$ some common linear
logic connectives including multiplicative conjunction ($\otimes$),
multiplicative implication ($\limplies$), additive disjunction ($\&$), and
additive conjunction ($\oplus$). We are unclear as to whether
multiplicative disjunction ($\invamp$) can be handled at all. We also
briefly mention the exponential connective $!$ at the end.

Note that the presented code is written in the concrete syntax of ATS,
which is largely ML-like. We expect that people who can read ML code should
have no great difficulty in following the presented ATS code as it makes
only use of common functional programming features.\\[-6pt]

\begin%
{figure}
\fontsize{8pt}{9pt}\selectfont
\begin%
{verbatim}
fun
fserv_times{A,B:type}
(
  chp: chpos(chsnd(chneg(A))::B)
, chn_a: chneg(A), chn_b: chneg(B)
) : void = let
//
val () =
  chanpos_send (chp, chn_a)
//
in
  chanposneg_link (chp, chn_b)
end // end of [fserv_times]
\end{verbatim}
\caption{Interpreting multiplicative conjunction}
\label{figure:fserv_times}
\end{figure}

\begin%
{figure}
\fontsize{8pt}{9pt}\selectfont
\begin%
{verbatim}
fun
fserv_implies{A,B:type}
(
  chp: chpos(chrcv(chneg(A))::B)
, fchn: chneg(A) -<lincloptr1> chneg(B)
) : void = let
  val chn_b = fchn(chanpos_recv(chp))
in
  cloptr_free(fchn); chanposneg_link(chp, chn_b)
end // end of [fserv_implies]
\end{verbatim}
\caption{Interpreting (multiplicative) implication}
\label{figure:fserv_implies}
\end{figure}
\noindent{\bf M-Conjunction($\otimes$):}\kern6pt
Given two session types $A$ and $B$, a channel of the session type
$A\otimes B$ can be interpreted as one that inputs a channel specified
by $A$ and then behaves as a channel specified by
$B$~\cite{ToninhoCP12,Wadler12}. This interpretation is from the
client's viewpoint, meaning in $\langch$ that $A\otimes B$ should be
defined as $\chsnd{\tchneg(A)}{B}$. Clearly, any reasonable
interpretation for $A\otimes B$ is expected to allow the construction
of a channel of the type $\tchneg(A\otimes B)$ based on two channels
of the types $\tchneg(A)$ and $\tchneg(B)$ and vice versa. In
Figure~\ref{figure:fserv_times}, a function \verb|fserv_times| is
implemented to allow a channel of the type $\tchneg(A\otimes B)$ to be
built by the following call:
\begin
{center}
\fontsize{8pt}{9pt}\selectfont
\verb|channeg_create(llam(chp) => fserv_times(chp,chn_a,chn_b))|
\end{center}
where \verb|chn_a| and \verb|chn_b| are channels of the types
$\tchneg(A)$ and $\tchneg(B)$, respectively, and the keyword
\verb|llam| forms a linear function (that is to be called exactly
once). The other direction (that is, obtaining $\tchneg(A)$ and
$\tchneg(B)$ from $\tchneg(A\otimes B)$) is straightforward and thus
skipped.

Of course, one can also interpret $A\otimes B$ as
$\chsnd{\tchneg(A)}{\chsnd{\tchneg(B)}{\stnil}}$. With this interpretation,
it should be obvious to see how $\tchneg(A\otimes B)$ can be constructed based
on $\tchneg(A)$ and $\tchneg(B)$ and vice versa.\\[-6pt]

\noindent{\bf M-Implication($\limplies$):}\kern6pt
Given two session types $A$ and $B$, a channel of the session type
$A\limplies B$ can be interpreted as one that outputs a channel specified
by $A$ and then behaves as a channel specified by
$B$~\cite{ToninhoCP12,Wadler12}. This interpretation is from the client's
viewpoint, meaning in $\langch$ that $A\limplies B$ should be defined as
$\chrcv{\tchneg(A)}{B}$. If one has a function \verb|fchn| that turns a
negative channel specified by A into a negative channel specified by B,
then one can build as follows a negative channel specified by $A\limplies
B$:
\begin
{center}
\fontsize{8pt}{9pt}\selectfont
\verb|channeg_create(llam(chp) => fserv_limplies(chp, fchn))|
\end{center}
where the function \verb|fserv_implies| is implemented in
Figure~\ref{figure:fserv_implies}.  Note that the function
\verb|cloptr_free| is called to explicitly free a linear function that has
been called.\\[-6pt]

\begin%
{figure}[t!]
\fontsize{8pt}{9pt}\selectfont
\begin%
{verbatim}
datatype
channeg_adisj
  (A:type, B:type, type) =
| channeg_adisj_l(A, B, A) of () // tag=0
| channeg_adisj_r(A, B, B) of () // tag=1

extern
fun
channeg_adisj{A,B:type}
(
  chn: !chneg(adisj(A, B)) >> chneg(X)
) : #[X:type] channeg_adisj(A, B, X)

extern
fun
chanpos_adisj_l{A,B:type}
  (chp: !chpos(adisj(A,B)) >> chpos(A)): void
and
chanpos_adisj_r{A,B:type}
  (chp: !chpos(adisj(A,B)) >> chpos(B)): void

(* ****** ****** *)

fun
fserv_adisj_l{A,B:type}
(
  chp: chpos(adisj(A, B)), chn: chneg(A)
) : void = let
  val()=chanpos_adisj_l(chp) in chanposneg_link(chp, chn)
end // end of [fserv_adisj_l]

fun
fserv_adisj_r{A,B:type}
(
  chp: chpos(adisj(A, B)), chn: chneg(B)
) : void = let
  val()=chanpos_adisj_r(chp) in chanposneg_link(chp, chn)
end // end of [fserv_adisj_r]
\end{verbatim}
\caption{Interpreting additive disjunction}
\label{figure:fserv_adisj_lr}
\end{figure}
\noindent{\bf A-Disjunction($\oplus$)}\kern6pt
Given two session types $A$ and $B$, a channel of the session type
$A\oplus B$ can be interpreted as one that inputs a boolean value
and then behaves as a channel specified by either $A$ or $B$ depending
on whether the boolean value is true or false, respectively.
This interpretation is from the client's viewpoint,
meaning in $\langch$ that $A\oplus B$ should be defined as follows:
\begin%
{center}
$\forall b:bool.~\chsnd{\tbool(b)}{\tchoose(A, B, b)}$
\end{center}
where $\tbool(b)$ is a singleton type for the boolean value
equal to $b$, and $\tchoose(A, B, b)$ equals either $A$ or $B$
depending on whether $b$ equals true or false, respectively.
It should be noted that the positive channel type
$\tchpos(\forall b:bool.~\chsnd{\ldots}{\ldots})$
should be interpreted as follows:
\begin%
{center}
$\forall b:bool.~\tchpos(\chsnd{\tbool(b)}{\tchoose(A, B, b)})$
\end{center}
but the negative one
$\tchneg(\forall b:bool.~\chsnd{\ldots}{\ldots})$ is as follows:
\begin%
{center}
$\exists b:bool.~\tchneg(\chsnd{\tbool(b)}{\tchoose(A, B, b)})$
\end{center}
We have not yet added into ATS direct support for quantified session types
like the ones appearing here. Instead, we rely on an indirect approach to
handle such types.  In Figure~\ref{figure:fserv_adisj_lr}, the type
constructor \verb|adisj| stands for $\oplus$.  We declare a datatype
\verb|channeg_adisj| so as to introduce two tags: \verb|channeg_adisj_l|
and \verb|channeg_adisj_r| are internally represented as 0 and 1,
respectively.  When applied to a negative channel specified by some session
type \verb|adisj(A,B)|, \verb|channeg_adisj| returns a tag received from the
channel. Note that the syntax \verb|[X:type]| represents an existential
quantifier and the symbol \verb|#| in front of it means that the type
variable \verb|X| to the left of the quantifier is also in its
scope. Essentially, the type assigned to \verb|channeg_adisj| indicates
that the tag returned from a call to the function determines the type of
the argument of the function after the call: 0 means \verb|chneg(A)| and 1
means \verb|chneg(B)| here. Applied to a positive channel,
\verb|chanpos_adisj_l| and \verb|chanpos_adisj_r| send 0 and 1 onto the
channel, respectively. By now, it should be clearly that the following call
turns a channel \verb|chn| of the type \verb|chneg(A)| into one of the type
\verb|chneg(adisj(A,B))|:
\begin
{center}
\fontsize{8pt}{9pt}\selectfont
\verb|channeg_create(llam(chp) => fserv_adisj_l(chp, chn))|
\end{center}
Similarly,
the following call turns a channel \verb|chn| of the type
\verb|chneg(B)| into one of the type \verb|chneg(adisj(A,B))|:
\begin
{center}
\fontsize{8pt}{9pt}\selectfont
\verb|channeg_create(llam(chp) => fserv_adisj_r(chp, chn))|
\end{center}
Often, the kind of choice associated with $A\oplus B$ is referred to
as {\it internal} choice as it is the server that determines whether
A or B is chosen.\\

\begin%
{figure}[t!]
\fontsize{8pt}{9pt}\selectfont
\begin%
{verbatim}
datatype
chanpos_aconj
  (A:type, B:type, type) =
| chanpos_aconj_l(A, B, A) of () // tag=0
| chanpos_aconj_r(A, B, B) of () // tag=1

extern
fun
chanpos_aconj{A,B:type}
(
  !chpos(aconj(A, B)) >> chpos(X)
) : #[X:type] chanpos_aconj(A, B, X)

extern
fun
channeg_aconj_l{A,B:type}
  (!chneg(aconj(A,B)) >> chneg(A)): void
and
channeg_aconj_r{A,B:type}
  (!chneg(aconj(A,B)) >> chneg(B)): void

(* ****** ****** *)

datavtype
choose(a:vtype, b:vtype, bool) =
  | choose_l (a, b, true) of (a)
  | choose_r (a, b, false) of (b)

fun
fserv_aconj
  {A,B:type}
(
  chp: chpos(aconj(A, B))
, fchn:
  {b:bool}
    bool(b) -<lincloptr1> choose(chneg(A), chneg(B), b)
  // fchn
) : void = let
  val opt = chanpos_aconj(chp)
in
//
case opt of
| chanpos_aconj_l() => let
    val~choose_l(chn) = fchn(true)
  in
    chanposneg_link(chp, chn)
  end // end of [chanpos_aconj_l]
| chanpos_aconj_r() => let
    val~choose_r(chn) = fchn(false)
  in
    chanposneg_link(chp, chn)
  end // end of [chanpos_aconj_r]
//
end // end of [fserv_aconj]
\end{verbatim}
\caption{Interpreting additive conjunction}
\label{figure:fserv_aconj}
\end{figure}
\noindent{\bf A-Conjunction($\&$)}\kern6pt
Given two session types $A$ and $B$, a channel of the session type $A\& B$
can be interpreted as one that outputs a boolean value and then behaves as
a channel specified by either $A$ or $B$ depending on whether the boolean
value is true or false, respectively.  This interpretation is from the
client's viewpoint, meaning in $\langch$ that $A\& B$ should be defined as
follows:
\begin%
{center}
$\forall b:bool.~\chrcv{\tbool(b)}{\tchoose(A, B, b)}$
\end{center}
Note that the positive channel type
$\tchpos(\forall b:bool.~\chrcv{\ldots}{\ldots})$
should be interpreted as follows in this case:
\begin%
{center}
$\exists b:bool.~\tchpos(\chrcv{\tbool(b)}{\tchoose(A, B, b)})$
\end{center}
but the negative one
$\tchneg(\forall b:bool.~\chrcv{\ldots}{\ldots})$ is as follows:
\begin%
{center}
$\forall b:bool.~\tchneg(\chrcv{\tbool(b)}{\tchoose(A, B, b)})$
\end{center}
The code in Figure~\ref{figure:fserv_aconj} essentially shows how a channel
of the type $\tchneg(A\& B)$ can be built based on a value of the type
$\tchneg(A)\& \tchneg(B)$. Note that \verb|aconj| refers to $\&$ when
called to form a session type. As $\&$ is not supported as a type
constructor in ATS, we use the following one to represent it:
\begin%
{center}
$\forall b:bool.~\tbool(b) \ltimp \tchoose(\tchneg(A), \tchneg(B), b)$
\end{center}
The declared datatype \verb|choose| is a linear one.  The symbol
\verb|~| in front of a linear data constructor like \verb|choose_l| means
that the constructor itself is freed after the arguments of the constructor
are taken out. What \verb|fserv_conj| does is clear: It checks the tag
received on its first argument \verb|chp| (a positive channel) and then
determine whether to offer \verb|chp| as a channel specified by A or B. So
the kind of choice provided by $\&$ is external: It is the client that
decides whether A or B is chosen.\\[-6pt]

\def\fservicecreate{\mbox{\it service\_create}}
\noindent{\bf Exponential($!$)}\kern6pt
Given a session type $\ST$, we have a type $\tservice(\ST)$ that can be
assigned to a value representing a {\it persistent} service specified by
$\ST$.  With such a service, channels of the type $\tchneg(\ST)$ can be
created repeatedly. A built-in function $\fservicecreate$ is assigned the
following type for creating a service:
\[\begin%
{array}{rcl}
\fservicecreate & : &
(\tchpos(\stvar)\itimp\tunit)\ctimp\tservice(\stvar) \\
\end{array}\]
In contrast with $\fchnegcreate$ for creating a channel, $\fservicecreate$
requires that its argument be a non-linear function (so that this function
can be called repeatedly).

\begin%
{figure}[t!]
\fontsize{8pt}{9pt}\selectfont
\begin%
{verbatim}
datatype
chanpos_list
  (a:type, type) =
| chanpos_list_nil(a, nil) of () // tag=0
| chanpos_list_cons(a, snd(a)::sslist(a)) of () // tag=1

extern
fun
chanpos_list{a:type}
(
  chp: !chanpos(sslist(a)) >> chanpos(ss)
) : #[ss:type] chanpos_list(a, ss)

extern
fun
channeg_list_nil{a:type}
  (chn: !chneg(sslist(a)) >> chneg(nil)): void
and
channeg_list_cons{a:type}
(
  chn: !chneg(sslist(a)) >> chneg(snd(a)::sslist(a))
) : void // end-of-function
\end{verbatim}
\caption{Functions for unfolding session list channels}
\label{figure:sieve:sslist-unfolding}
\end{figure}

\begin%
{figure}[t!]
\fontsize{8pt}{9pt}\selectfont
\begin%
{verbatim}
implement
ints_filter
  (chn, n0) = let
//
fun
getfst
(
  chn: !chneg(sslist(int))
) : int = let
//
val () =
  channeg_list_cons(chn)
//
val fst = channeg_send(chn)
//
in
  if fst mod n0 > 0 then fst else getfst(chn)
end // end of [getfst]
//
fun
fserv
(
  chp: chpos(sslist(int))
, chn: chneg(sslist(int))
) : void = let
//
val opt = chanpos_list(chp)
//
in
//
case opt of
| chanpos_list_nil() =>
  (
    chanpos_close(chp);
    channeg_list_close(chn)
  )
| chanpos_list_cons() =>
  (
    chanpos_send(chp, getfst(chn)); fserv(chp, chn)
  )
//
end // end of [fserv]
//
in
  channeg_create(llam(chp) => fserv(chp, chn))
end // end of [ints_filter]
\end{verbatim}
\caption{Ints\_filter: removing multiples of a given integer}
\label{figure:sieve:ints_filter}
\end{figure}

\begin%
{figure}[t!]
\fontsize{8pt}{9pt}\selectfont
\begin%
{verbatim}
implement
sieve() = let
//
fun
fserv
(
  chp: chpos(sslist(int))
, chn: chneg(sslist(int))
) : void = let
//
val opt = chanpos_list(chp)
//
in
//
case opt of
| chanpos_list_nil() =>
  (
    chanpos_close(chp);
    channeg_list_close(chn)
  )
| chanpos_list_cons() => let
    val () =
      channeg_list_cons(chn)
    // end of [val]
    val p0 = channeg_send(chn)
    val chn = ints_filter(chn, p0)
  in
    chanpos_send(chp, p0); fserv(chp, chn)
  end // end of [channeg_list_cons]
//
end // end of [fserv]
//
in
  channeg_create(llam(chp) => fserv(chp, ints_from(2)))
end // end of [sieve]
\end{verbatim}
\caption{Implementing Eratosthenes's sieve}
\label{figure:sieve:sieve}
\end{figure}

\begin%
{figure}[t!]
\fontsize{8pt}{9pt}\selectfont
\begin%
{verbatim}
datatype
chanpos_ssque
  (a:type, type, int) =
| chanpos_ssque_nil(a, nil, 0) of ()
| {n:pos}
  chanpos_ssque_deq(a,snd(a)::ssque(a,n-1),n) of ()
| {n:nat}
  chanpos_ssque_enq(a,rcv(a)::ssque(a,n+1),n) of ()

extern
fun
chanpos_ssque
  {a:type}{n:nat}
  (!chpos(ssque(a,n)) >> chpos(ss))
: #[ss:type] chanpos_ssque(a, ss, n)

extern
fun
channeg_ssque_nil
  {a:type}
  (!chneg(ssque(a,0)) >> chneg(nil)): void
extern
fun
channeg_ssque_deq
  {a:type}{n:pos}
(
  !chneg(ssque(a,n)) >> chneg(snd(a)::ssque(a,n-1))
) : void // end-of-function
and
channeg_ssque_enq
  {a:type}{n:nat}
(
  !chneg(ssque(a,n)) >> chneg(rcv(a)::ssque(a,n+1))
) : void // end-of-function
\end{verbatim}
\caption{Functions for unfolding session queue channels}
\label{figure:queue:ssque-unfolding}
\end{figure}

\begin%
{figure}[t!]
\fontsize{8pt}{9pt}\selectfont
\begin%
{verbatim}
implement
queue_create
  {a}((*void*)) = let
//
fun
fserv
(
  chp: chpos(ssque(a,0))
) : void = let
  val opt = chanpos_ssque(chp)
in
//
case opt of
| chanpos_ssque_nil() =>
    chanpos_close(chp)
| chanpos_ssque_enq() => let
    val x0 = chanpos_recv{a}(chp)
  in
    fserv2(x0, queue_create(), chp)
  end // end of [chanpos_ssque_enq]
//
end // end of [fserv]
//
and
fserv2
{n:nat}
(
  x0: a
, chn: chneg(ssque(a,n))
, chp: chpos(ssque(a,n+1))
) : void = let
  val opt = chanpos_ssque(chp)
in
//
case opt of
| chanpos_ssque_deq() => let
    val () = chanpos_send(chp, x0)
  in
    chanposneg_link(chp, chn)
  end // end of [chanpos_ssque_deq]
| chanpos_ssque_enq() => let
    val y = chanpos_recv{a}(chp)
    val () = channeg_ssque_enq(chn)
    val () = channeg_recv(chn, y)
  in
    fserv2(x0, chn, chp)
  end // end of [chanpos_ssque_enq]
//
end // end of [fserv2]
//
in
  channeg_create(llam(chp) => fserv(chp))
end // end of [queue_create]
\end{verbatim}
\caption{Queue\_create: creating an empty queue of channels}
\label{figure:queue:queue_create}
\end{figure}

\section%
{Examples of Session-Typed Programs}
\label{section:examples}
We present some simple running examples in this section to further
illustrate programming with session types. More practical examples
(e.g. FTP and reversed FTP) are available but difficult to present
here.

\subsection{Eratosthenes's Sieve}
\label{subsection:examples:sieve}
We give an implementation of {\it Eratosthenes's sieve} based on
session-typed channels. In particular, we make use of channels specified
by session lists that are of indefinite length.
This example is essentially taken from SILL~\cite{SILL}.
In Figure~\ref{figure:sieve:sslist-unfolding}, we give three functions for
unfolding channels specified by session lists: one for positive channels
and two for negative channels. Strictly speaking, we should use the name
co-lists (instead of lists) here as it is the client that decides whether
the next element of a list should be generated or not.

For someone familiar with stream-based lazy evaluation, the code in
Figure~\ref{figure:sieve:ints_filter} and Figure~\ref{figure:sieve:sieve}
should be easily accessible.  Given a channel \verb|chn| of integers and an
integer \verb|n0|, the function \verb|ints_filter| builds a new channel of
integers that outputs, when requested, the first integer from \verb|chn|
that is not a multiple of \verb|n0|. Given a channel \verb|chn| of
integers, the function \verb|sieve| outputs, when requested, the first
integer \verb|p0| from \verb|chn|, and then applies \verb|ints_filter| to
\verb|chn| and \verb|p0| to build a new channel, and then applies itself to
the new channel recursively. Note that the code for \verb|ints_from| is
omitted, which returns a channel of all the integers starting from a given
one.

\subsection{A Queue of Channels}
\label{subsection:examples:queue}
We give a queue implementation based on a queue of session-typed channels.
This example is largely based on one in SILL. What is novel here mainly
involves the use of DML-style dependent types to specify the size of each
queue in the implementation.

Given a type \verb|a| and an integer \verb|n|,
we use \verb|ssque(a,n)| as a session type to specify a channel
representing a queue of size \verb|n| in which each element is of the type
\verb|a|. In Figure~\ref{figure:queue:ssque-unfolding}, there are
four functions for unfolding channels specified by session queues: one for
positive channels and three for negative channels.

The function \verb|queue_create| creates a negative channel representing an
empty queue. Based on the code for \verb|queue_create|, we can see that a
queue of size $n$ is represented by a queue of $n+1$ channels (where the
last one always represents an empty queue); each element in the queue is
held in the corresponding channel (or more precisely, the thread running to
support the channel); enqueuing an element is done by sending the element
down to the last channel (representing an empty queue), causing this
channel to create another channel (representing an empty queue); dequeuing
is done by sending out the element held in the first channel (in the queue
of channels) while the thread running to support the channel turns into one
that does bidirectional forwarding.

Note that the implementation of \verb|queue_create| never needs to handle
dequeuing an empty queue or closing a non-empty queues as these operations
are ill-typed, reaping typical benefits from (DML-style) dependent types.

\section%
{Implementing Session-Typed Channels}
\label{section:implementation}
As far as implementation is of the concern, there is very little that needs
to be done regarding typechecking in order to support session types in
ATS. Essentially, the entire effort focuses on implementing session-typed
channels.

\subsection{Implementation in ATS}

The session-typed channels as presented in this paper are first
implemented in ATS.  The parties communicating to each other in a
(dyadic) session run as pthreads.  Each channel is represented as a
record containing two buffers and some locking mechanism (i.e.,
mutexes and conditional variables); a positive channel and its
negative dual share their buffers; the read buffer of a channel is the
write buffer of its dual and vice versa. This implementation (of
session-typed channels) is primarily done for the purpose of obtaining
a proof of concept.

\subsection{Implementation in Erlang}

Another implementation of session-typed channels is done in Erlang.
As the ML-like core of ATS can already be compiled into Erlang, we
have now an option to construct distributed programs in ATS that may
make use of session types and then translate these programs into
Erlang code for execution, thus taking great advantage of the
infrastructural support for distributed computing in Erlang. Each
channel is implemented as a pair of processes; a positive channel
shares with its dual the two processes: One handles read for the
positive channel and write for its negative dual, and the other does
the opposite.  Implementing the functions $\mbox{\tt
  chanpos\_send}$/$\mbox{\tt channeg\_recv}$ and $\mbox{\tt
  chanpos\_recv}$/$\mbox{\tt channeg\_send}$ is straightforward. A
significant complication occurs in our implementation of $\mbox{\tt
  chanposneg\_link}$, which requires the sender of a message to
deliver it at its final destination (instead of having it forwarded
there explicitly).  For a straightforward but much less efficient
implementation of $\mbox{\tt chanposneg\_link}$, one can just rely on
explicit forwarding.

\section{Related Work and Conclusion}
\label{section:related-work-conclusion}

Session types were introduced by Honda~\cite{Honda93} and further
extended~\cite{TakeuchiHK94,HondaVK98}.  There have since been
extensive theoretical studies on session types in the
literature(e.g.,~\cite{HondaYC08,CastagnaDGP09,GayV10,CairesP10,ToninhoCP11,Vasconcelos12,Wadler12}).
However, there is currently rather limited support for practical
programming with session types, and more evidence is clearly needed to
show convincingly that session types can actually be employed
cost-effectively in the construction of relatively large and complex
programs. It is in this context that we see it both interesting and
relevant to study implementation of session types formally.

There are reported implementations of session types in
Java\cite{HuYH08,HuKPYH10,FrancoV13} and other languages
(e.g. Python).  However, these implementations are of a very different
nature when compared to $\langch$. For instance, they mostly focus on
session-typed functionalities being implemented rather than the
(formal) correctness of the implementation of these functionalities.

There are also several implementations of session types in Haskell
(e.g.,~\cite{NeubauerT04,PucellaT08}), which primarily focus on
obtaining certain typeful encodings for session-typed channels.  While
the obtained encodings are shown to be type-correct, there is no
provided mechanism to establish any form of global progress for them.
As is explained in the case of $\fchnegcreatetwo$, one can readily
introduce potential deadlocks inadvertently without breaking
type-correctness.

It is in general a challenging issue to establish deadlock-freeness for
session-typed concurrency. There are variations of session types that
introduce a partial order on time stamps~\cite{SumiiK98} or a
constraint on dependency graphs \cite{abs-1010-5566}.  As for
formulations of session types (e.g.,~\cite{CairesP10,Wadler12}) based
on linear logic~\cite{GIRARD87}, the standard technique for
cut-elimination is often employed to establish global progress (which
implies deadlock-freeness). In $\langch$, there is no explicit tracking
of cut-rule applications in the type derivation of a program (and it
is unclear how such tracking can be done). In essence, the notion of
DF-reducibility (Definition~\ref{def:DFreducibility}) is introduced in
order to carry out cut-elimination in the absence of explicit tracking
of cut-rule applications.

Probably, $\langch$ is most closely related to
SILL~\cite{ToninhoCP13}, a functional programming language that adopts
via a contextual monad a computational interpretation of linear
sequent calculus as session-typed processes.  Unlike in $\langch$, the
support for linear types in SILL is not direct and only monadic values
(representing open process expressions) in SILL can be linear. For
instance, one can readily construct linear data containing channels in
ATS but this is not allowed in SILL. In terms of theoretical
development, the approach to establishing global progress in SILL
cannot be applied to $\langch$ directly. Instead, we see
Lemma~\ref{lemma:DFreducibility:3} as a generalization of the
argument presented in the proof of the theorem on global progress in
SILL. Please see Theorem~5.2~\cite{ToninhoCP13} for details.

Also, $\langch$ is related to previous work on incorporating session
types into a multi-threaded functional language~\cite{VasconcelosRG04},
where a type safety theorem is established to ensure that the
evaluation of a well-typed program can never lead to a so-called
{\em faulty configuration}. However, this theorem does not imply
global progress as a program that is not of faulty configuration can
still deadlock.

As for future work, we are particularly interested in extending $\langch$
with multi-party session types~\cite{HondaYC08}. It will be very interesting
to see whether the approach we use to establish global progress for
$\langch$ can be adapted to handling multi-party session types. We are also
interested in studying session types in restricted settings. For instance,
the number of channels allowed in a program is required to be bounded;
channels (of certain types) may not be sent from one party to another;
etc.

There are a variety of programming issues that need to be addressed in
order to facilitate the use of session types in practice.  Currently,
session types are represented as datatypes in ATS, and programming
with session types often involves writing boilerplate code (e.g., the
code implementing functions like \verb|chanpos_list|,
\verb|channeg_list_nil| and \verb|channeg_list_cons|). In the presence
of large and complex session types, writing such code can tedious and
error-prone. Naturally, we are interested in developing some
meta-programming support for generating such code automatically.
Also, we are in process of designing and implementing session
combinators (in the spirit of parsing combinators~\cite{Hutton92})
that can be conveniently called to assemble subsessions into a
coherent whole.

\end{document}